\providecommand{\tabularnewline}{\\}
 \definecolor{BLACK}{gray}{0}
 \definecolor{WHITE}{gray}{1}
 \definecolor{RED}{rgb}{1,0,0}
 \definecolor{GREEN}{rgb}{0,1,0}
 \definecolor{BLUE}{rgb}{0,0,1}
 \definecolor{CYAN}{cmyk}{1,0,0,0}
 \definecolor{MAGENTA}{cmyk}{0,1,0,0}
 \definecolor{YELLOW}{cmyk}{0,0,1,0}
 \theoremstyle{definition}
 \newtheorem*{defn*}{\protect\definitionname}
  \theoremstyle{plain}
    \newtheorem{lem}{\protect\lemmaname}
    \newtheorem{lem}{\protect\lemmaname}[chapter]
  \theoremstyle{definition}
    \newtheorem{defn}{\protect\definitionname}
    \newtheorem{defn}{\protect\definitionname}[chapter]
  \theoremstyle{plain}
    \newtheorem{thm}{\protect\theoremname}
    \newtheorem{thm}{\protect\theoremname}[chapter]
  \providecommand{\definitionname}{Definition}
  \providecommand{\lemmaname}{Lemma}
\providecommand{\theoremname}{Theorem}
\begin{document}

\title{The gap of the area-weighted Motzkin spin chain is exponentially
small}

\author{Lionel Levine}
\email{levine@math.cornell.edu}

\selectlanguage{english}%

\affiliation{Department of Mathematics, Cornell University, Ithaca, NY 14853}

\author{Ramis Movassagh }
\email{q.eigenman@gmail.com}

\selectlanguage{english}%

\affiliation{Department of Mathematical Sciences, IBM T. J . Watson Research Center,
Yorktown Heights, NY 10598}

\date{\today}

\maketitle
We prove that the energy gap of the model proposed by Zhang, Ahmadain,
and Klich \cite{zhang2016quantum} is exponentially small in the square
of the system size. In \cite{MovassaghShor2016} a class of exactly
solvable quantum spin chain models was proposed that have integer
spins ($s$), with a nearest neighbors Hamiltonian, and a unique ground
state. The ground state can be seen as a uniform superposition of
all $s-$colored Motzkin walks. The half-chain entanglement entropy
provably violates the area law by a square root factor in the system's
size ($\sim\sqrt{n}$) for $s>1$. For $s=1$, the violation is logarithmic
\cite{Movassagh2012_brackets}. Moreover in \cite{MovassaghShor2016}
it was proved that the gap vanishes polynomially and is $O(n^{-c})$
with $c\ge2$. 

Recently, a deformation of \cite{MovassaghShor2016}, which we call
``weighted Motzkin quantum spin chain'' was proposed \cite{zhang2016quantum}.
This model has a unique ground state that is a superposition of the
$s-$colored Motzkin walks weighted by $t^{\text{area\{Motzkin walk\}}}$
with $t>1$. The most surprising feature of this model is that it
violates the area law by a factor of $n$. Here we prove that the
gap of this model is upper bounded by $8ns\text{ }t^{-n^{2}/3}$ for
$t>1$.

\tableofcontents{}

\section{\label{sec:Context-and-summary}Context and summary of the results}

In recent years there has been a surge of activities in developing
new exactly solvable models that give large violations of the area
law for the entanglement entropy \cite{Movassagh2012_brackets,MovassaghShor2016,dell2016violation,salberger2016fredkin,zhang2016quantum}.
The notion of exactly solvable in these works means that the ground
state can be written down analytically and the gap to the first excited
state can be quantified. In some cases certain correlation functions
can be analytically calculated as well (e.g., \cite{movassagh2016entanglement}).
Understanding the gap is important for the physics of quantum many-body
systems.

Area law says that the entanglement entropy of the ground state of
a gapped Hamiltonian between a subsystem and the rest scales as the
boundary of the subsystem. This has only rigorously been proved in
one dimension \cite{Matth_areal}, yet is believed to hold in higher
dimensions as well. For gapless one-dimensional systems, based on
detailed and precise results in critical systems described by conformal
field theories, the area law was believed to be violated by at most
a logarithmic factor in the system's size. The above presume physical
reasonability of the underlying model, which means the Hamiltonian
is local, translationally invariant in the bulk with a unique ground
state.

In \cite{MovassaghShor2016} a class of exactly solvable quantum
spin-chains was proposed that violate the area law by a square root
factor in the system's size. They have positive integer spins ($s>1$),
the Hamiltonian is nearest neighbors with a unique ground state that
can be seen as a uniform superposition of $s-$colored Motzkin walks.
The half-chain entanglement entropy provably scales as a square root
factor in the system's size ($\sim\sqrt{n}$). The power-law violation
of the entanglement entropy in that work provides a counter-example
to the widely believed notion, that translationally invariant spin
chains with a unique ground state and local interactions can violate
the area law by at most a logarithmic factor in the system's size. 

This 'super-critical' violation of the area law for a physical system
has inspired follow-up works; most notable are \cite{salberger2016fredkin,dell2016violation}
and \cite{zhang2016quantum}. A class of \textit{half}-integer spin
chains, called Fredkin spin chain \cite{salberger2016fredkin,dell2016violation},
was proposed with similar behavior and scaling of the entanglement
entropy as in \cite{MovassaghShor2016}. 

More recently, a deformation of the Hamiltonian in \cite{MovassaghShor2016}
was proposed by Z. Zhang, A. Ahmadain, I. Klich, in which, the ground
state is a superposition of all Motzkin walks weighted by the area
between the Motzkin walk and the horizontal axis \cite{zhang2016quantum}.
The half-chain entanglement entropy of this model violates the area
law with the maximum possible scaling factor with the system's size
(i.e., $n$). However, they did not quantify the gap to the first
excited state.

In \cite{MovassaghShor2016} the gap to the first excited state was
proved to scale as $n^{-c}$, where $c\ge2$. Later it was shown that
the gap of Fredkin spin chain has the same scaling with the system's
size \cite{movassagh2016gap}. Here we prove an upper bound on the
gap of the weighted-Motzkin quantum spin-chain proposed in \cite{zhang2016quantum}
that scales as $8ns\text{ }t^{-n^{2}/3}$ where $t>1$. 

We remind our reader the asymptotic notations:
\begin{itemize}
\item $g(n)$ is $O(f(n))$ if and only if for some constants $c$ and $n_{0}$,
$g(n)\le cf(n)$ for all $n\ge n_{0}$,
\item $g(n)$ is $\Omega(f(n))$ if for some constants $c$ and $n_{0}$,
$g(n)\ge cf(n)$ for all $n\ge n_{0}$,
\item $g(n)$ is $\Theta(f(n))$ if $g(n)=O(f(n))$ and $g(n)=\Omega(f(n))$. 
\end{itemize}
Let us denote the gap by $\Delta$. Below if we want to emphasize
the gap of a particular Hamiltonian, we write $\Delta(H)$ for the
gap. In the table below, we summarize what is known about the recent
results that achieve ``super-critical'' scaling of the entanglement
entropy in physical quantum spin-chains:\\
\begin{center}
\begin{tabular}{|c|c|c|c|}
\hline 
\multirow{2}{*}{The model} & \multirow{2}{*}{Spin dimension} & Entanglement Entropy & \multirow{2}{*}{Gap}\tabularnewline
 &  & Approximately & \tabularnewline
\hline 
\cite{MovassaghShor2016} & $s>1$ integer & $\sqrt{n}\log(s)$ & $\Theta(n^{-c})\le\Delta\le\Theta(n^{-2}),\quad c\gg1$\tabularnewline
\hline 
\cite{salberger2016fredkin,dell2016violation}  & $s>1/2$ half-integer & $\sqrt{n}\log(s)$ & $\Theta(n^{-c})\le\Delta\le\Theta(n^{-2}),\quad c\gg1$ \cite{movassagh2016gap}\tabularnewline
\hline 
\cite{zhang2016quantum} & $s$ positive integer & $n\log(s)$ & $\Delta\le8ns\text{ }t^{-n^{2}/3},\quad t>1${*}\tabularnewline
\hline 
\multicolumn{4}{|l|}{{*} Proved in this paper.}\tabularnewline
\hline 
\end{tabular}
\par\end{center}
\section{\label{sec:Hamiltonian-Gap-of}Hamiltonian Gap of Recent Exactly
Solvable Models}
\subsection{\label{subsec:Movassagh_shor}The Motzkin quantum spin chain}
\begin{figure}
\centering{}\includegraphics[scale=0.3]{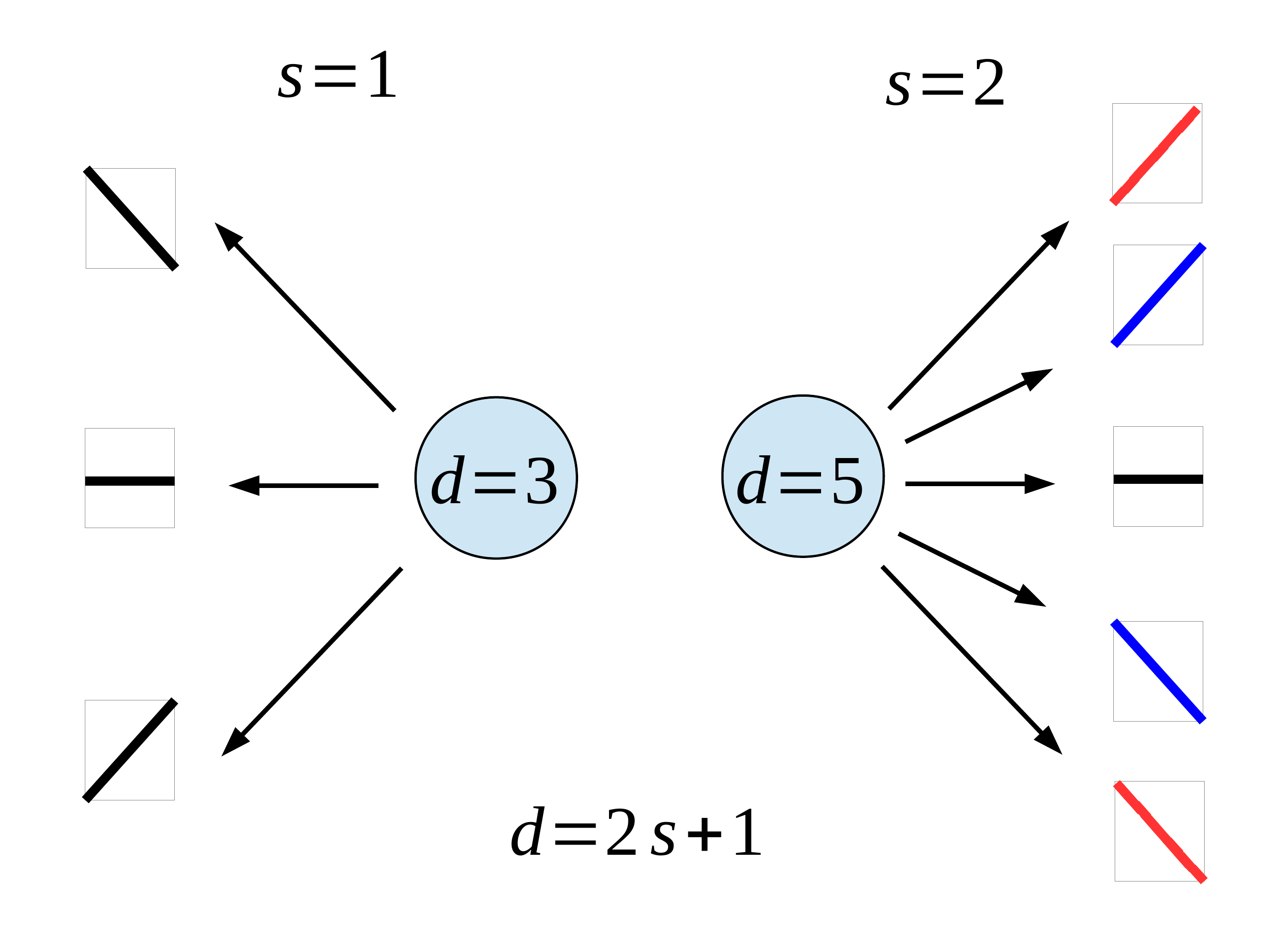}\caption{\label{fig:spin_States}Labels for the $2s+1$ states for $s=1$ and
$s=2$. Note that the flat steps are always black for all $s$. }
\end{figure}
The predecessor of \cite{zhang2016quantum} is the colored Motzkin
spin chain \cite{MovassaghShor2016}, which we now describe. We take
the length of the chain to be $2n$ and consider an integer spin$-s$
chain. As before the $d=2s+1$ spin states are labeled by up and down
steps of $s$ different colors as shown in Fig. \ref{fig:spin_States}.
Equivalently, and for better readability, we instead use the labels
$\left\{u^1,u^2,\cdots,u^s,0,d^1,d^2,\cdots,d^s\right\}$ where $u$
means a step up and $d$ a step down. We distinguish each \textit{type}
of step by associating a color from the $s$ colors shown as superscripts
on $u$ and $d$. Lastly, $0$ denotes a flat step which always has
a single color (black).

A Motzkin walk on $2n$ steps is any walk from $\left(x,y\right)=\left(0,0\right)$
to $\left(x,y\right)=\left(2n,0\right)$ with steps $\left(1,0\right)$,
$\left(1,1\right)$ and $\left(1,-1\right)$ that never passes below
the $x-$axis, i.e., $y\ge0$. An example of such a walk is shown
in Fig. \ref{fig:Motzkin}. Each up step in a Motzkin walk has a corresponding
down step. When the Motzkin walk is $s-$colored, the up steps are
colored arbitrary from $1,\dots,s$, and each down step has the same
color as its corresponding up step (see Fig. \ref{fig:Motzkin}).

Denote by $\Omega_{j}$ the set of all $s-$colored Motzkin paths
of length $j$. Below for notational convenience we simply write $\Omega\equiv\Omega_{2n}$.
In this model the unique ground state is the $s-$colored \textit{Motzkin
state} which is defined to be the uniform superposition of all $s$
colorings of Motzkin walks on $2n$ steps
\[
|{\cal M}_{2n}\rangle=\frac{1}{\sqrt{|{\cal M}_{2n}|}}\sum_{x\in\Omega}|x\rangle.
\]

The half-chain entanglement entropy is asymptotically \cite{MovassaghShor2016}
\begin{eqnarray*}
S & = & 2\log_{2}\left(s\right)\mbox{ }\sqrt{\frac{2\sigma}{\pi}}\mbox{ }\sqrt{n}+\frac{1}{2}\log_{2}\left(2\pi\sigma n\right)+\left(\gamma-\frac{1}{2}\right)\log_{2}e\quad\mbox{bits}
\end{eqnarray*}
where $\sigma=\frac{\sqrt{s}}{2\sqrt{s}+1}$ and $\gamma$ is Euler's
constant. The Motzkin state is a pure state, whose entanglement entropy
is zero. However, the entanglement entropy quantifies the amount of
disorder produced (i.e., information lost) by ignoring a subset of
the chain. The leading order $\sqrt{n}$ scaling of the half-chain
entropy establishes that there is a large amount of quantum correlations
between the two halves. 
\begin{figure}
\begin{centering}
\includegraphics[scale=0.35]{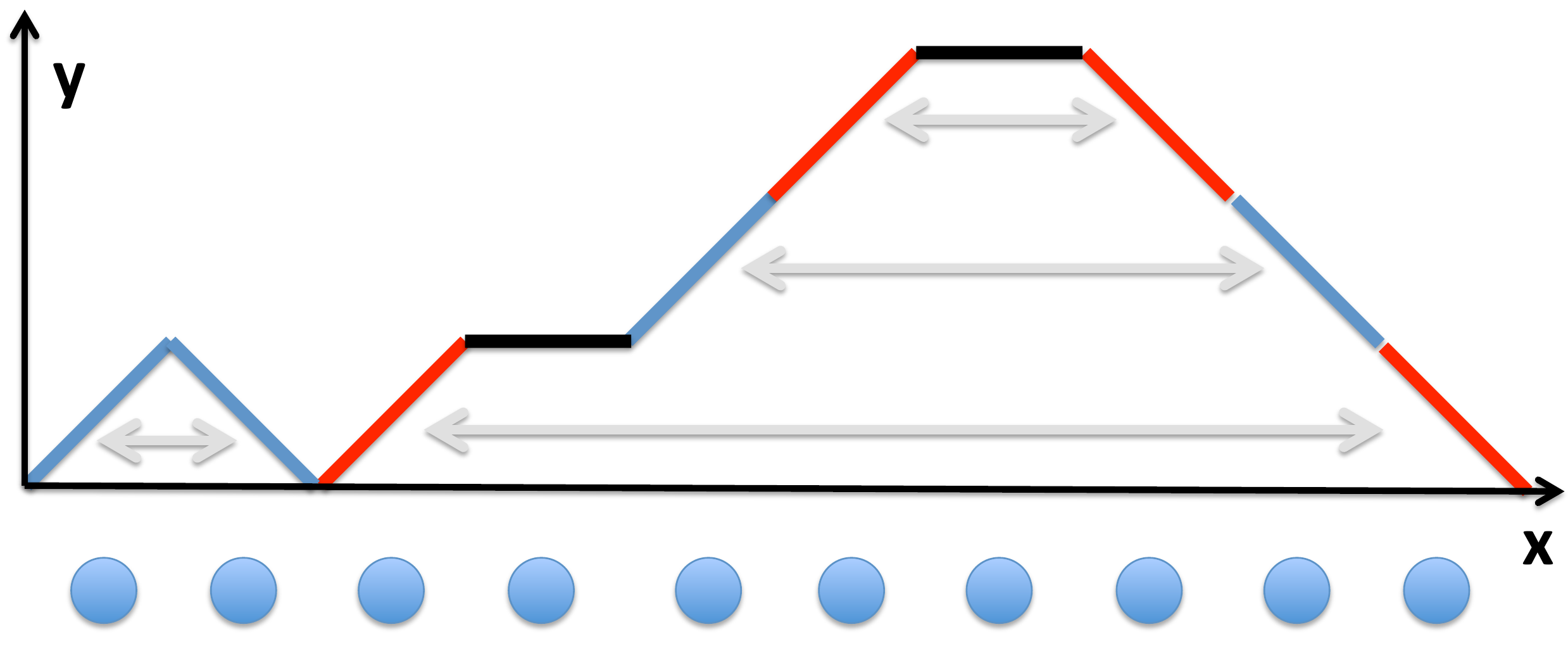}
\par\end{centering}
\centering{}\caption{\label{fig:Motzkin}A Motzkin walk with $s=2$ colors on a chain of
length $2n=10$. }
\end{figure}

Part of the reason that there is $\sqrt{n}$ half-chain entanglement
entropy in \cite{MovassaghShor2016} is that the color of each down
step must match the color of its corresponding up step, and order
$\sqrt{n}$ of these matched pairs are in opposite halves of the chain.
The latter is because, the expected height in the middle of the chain
scales as $\sqrt{n}$, which is a consequence of universality of Brownian
motion and the convergence of Motzkin walks to Brownian excursions. 

Consider the following local operations to any Motzkin walk: interchanging
zero with a non-flat step (i.e., $0u^{k}\leftrightarrow u^{k}0$ or
$0d^{k}\leftrightarrow d^{k}0$) or interchanging a consecutive pair
of zeros with a peak of a given color (i.e., $00\leftrightarrow u^{k}d^{k}$).
These are shown in Fig. \ref{fig:Local-moves-Motzkin}. Any $s-$colored
Motzkin walk can be obtained from another one by a sequence of these
local changes. 

To construct a local Hamiltonian with projectors as interactions that
has the uniform superposition of the Motzkin walks as its zero energy
ground state, each of the local terms of the Hamiltonian has to annihilate
states that are symmetric under these interchanges. Local projectors
as interactions have the advantage of being robust against certain
perturbations \cite{Kraus2008}. This is important from a practical
point of view and experimental realizations.
\begin{figure}
\centering{}\includegraphics[scale=0.45]{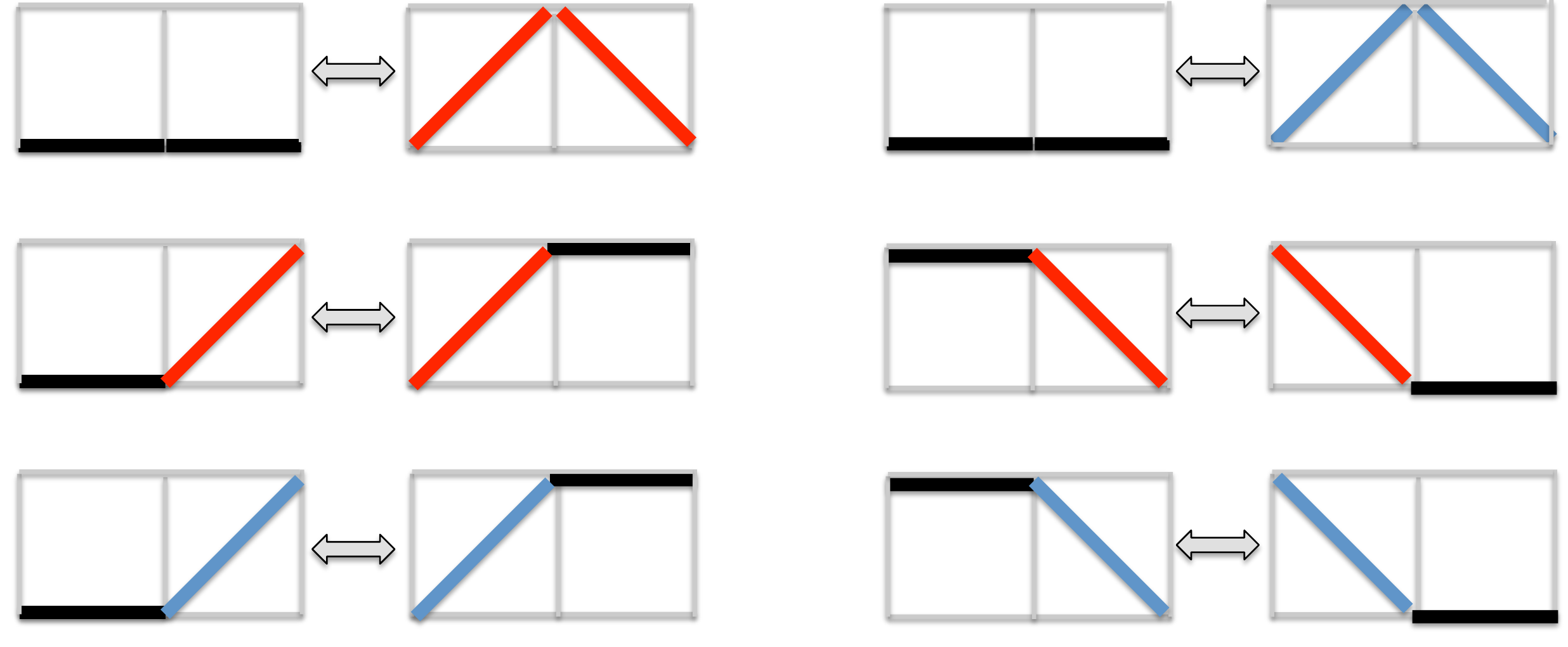}\caption{\label{fig:Local-moves-Motzkin}Local moves for $s=2$.}
\end{figure}

The local Hamiltonian that has the Motzkin state as its unique zero
energy ground state is \cite{MovassaghShor2016}
\begin{equation}
H=\Pi_{boundary}+\sum_{j=1}^{2n-1}\Pi_{j,j+1}+\sum_{j=1}^{2n-1}\Pi_{j,j+1}^{cross},\label{eq:H}
\end{equation}
where $\Pi_{j,j+1}$ implements the local operations discussed above
and is defined by 
\[
\Pi_{j,j+1}\equiv\sum_{k=1}^{s}\left[|U^{k}\rangle_{j,j+1}\langle U{}^{k}|+|D^{k}\rangle_{j,j+1}\langle D^{k}|+|\varphi^{k}\rangle_{j,j+1}\langle\varphi^{k}|\right]
\]
with $|U^{k}\rangle=\frac{1}{\sqrt{2}}\left[|0u^{k}\rangle-|u^{k}0\rangle\right]$,
$|D^{k}\rangle=\frac{1}{\sqrt{2}}\left[|0d^{k}\rangle-|d^{k}0\rangle\right]$
and $|\varphi^{k}\rangle=\frac{1}{\sqrt{2}}\left[|00\rangle-|u^{k}d^{k}\rangle\right]$.
The projectors $\Pi_{boundary}\equiv\sum_{k=1}^{s}\left[|d^{k}\rangle_{1}\langle d^{k}|+|u^{k}\rangle_{2n}\langle u^{k}|\right]$
select out the Motzkin state by excluding all walks that start and
end at non-zero heights. Lastly, $\Pi_{j,j+1}^{cross}=\sum_{k\ne i}|u^{k}d^{i}\rangle_{j,j+1}\langle u^{k}d^{i}|$
ensures that balancing is well ordered (i.e., prohibits $00\leftrightarrow u^{k}d^{i}$);
these projectors are required only when $s>1$ and do not appear in
\cite{Movassagh2012_brackets}. 

\subsection{The weighted Motzkin quantum spin chain}

Let $t>0$ be a real parameter and the state $|GS\rangle$ be defined
by 
\begin{figure}
\centering{}\includegraphics[scale=0.4]{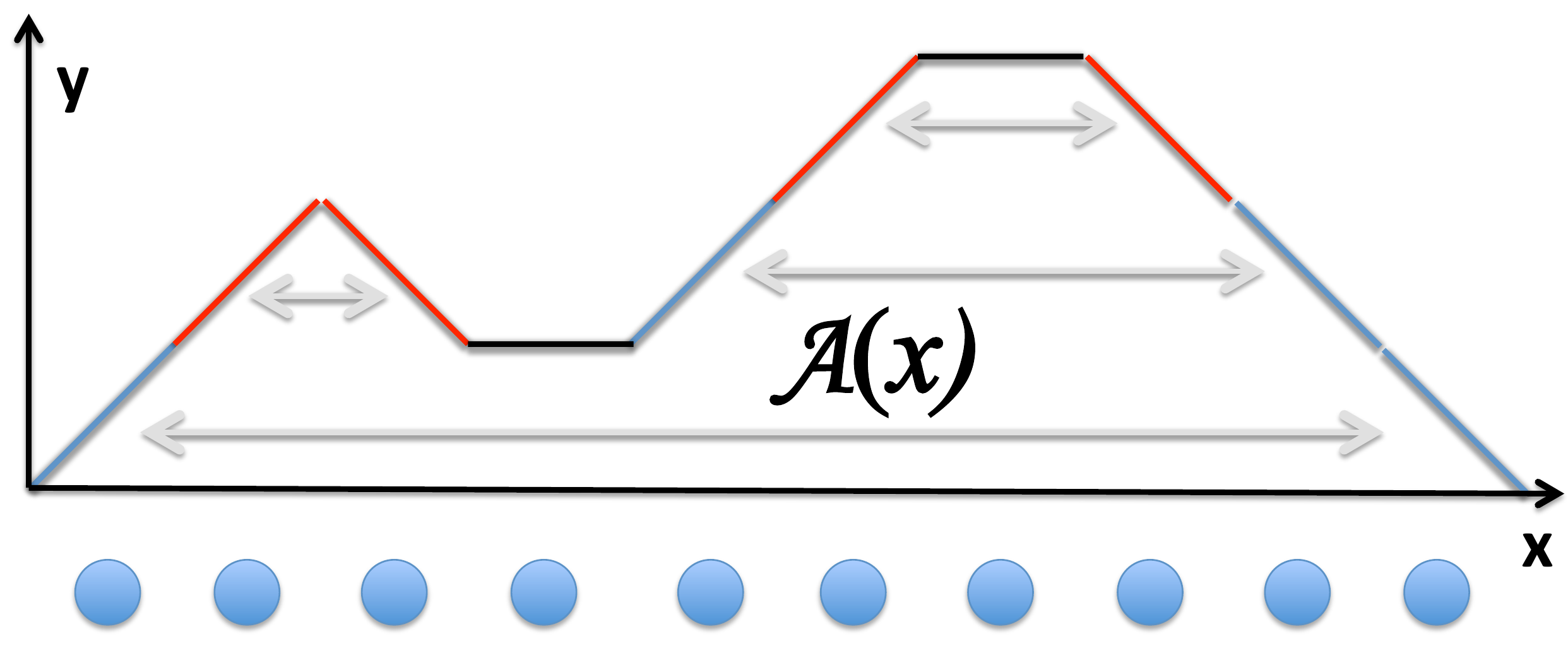}\caption{\label{fig:Weighted-Motzkin-walk}A $2-$colored Motzkin walk, $x$,
with the area ${\cal A}(x)$.}
\end{figure}
\[
|GS\rangle=\frac{1}{\sqrt{Z}}\sum_{x\in\Omega}t^{{\cal A}(x)}|x\rangle,
\]
where ${\cal A}(x)$ is the area enveloped by the Motzkin path and
the $x-$axis (See Fig. \ref{fig:Weighted-Motzkin-walk}), and $Z\equiv\sum_{x\in\Omega}t^{2{\cal A}(x)}$
is the normalization. 

Comment: Taking $t=1$, $|GS\rangle$ becomes the Motzkin state described
above. Taking $t<1$, it was shown in \cite{zhang2016quantum} that
the ground state will have a half-chain entanglement entropy of $O(1)$. 

The local Hamiltonian with the weighted Motzkin state as its unique
zero energy ground state is \cite{zhang2016quantum}
\begin{equation}
H(t)=\Pi_{boundary}+\sum_{j=1}^{2n-1}\Pi_{j,j+1}+\sum_{j=1}^{2n-1}\Pi_{j,j+1}^{cross},\label{eq:H-1}
\end{equation}
where $\Pi_{boundary}$ and $\Pi_{j,j+1}^{cross}$ are as in \cite{MovassaghShor2016}.
However, $\Pi_{j,j+1}$ is deformed and depends on a parameter $t>0$
\begin{equation}
\Pi_{j,j+1}(t)\equiv\sum_{k=1}^{s}\left[|U^{k}(t)\rangle_{j,j+1}\langle U{}^{k}(t)|+|D^{k}(t)\rangle_{j,j+1}\langle D^{k}(t)|+|\varphi^{k}(t)\rangle_{j,j+1}\langle\varphi^{k}(t)|\right]\label{eq:Projectors}
\end{equation}
where $|U^{k}(t)\rangle=\frac{1}{\sqrt{1+t^{2}}}\left[t\text{ }|0u^{k}\rangle-|u^{k}0\rangle\right]$,
$|D^{k}(t)\rangle=\frac{1}{\sqrt{1+t^{2}}}\left[|0d^{k}\rangle-t\text{ }|d^{k}0\rangle\right]$
and $|\varphi^{k}(t)\rangle=\frac{1}{\sqrt{1+t^{2}}}\left[|u^{k}d^{k}\rangle-t\text{ }|00\rangle\right]$.
If $t>1$, then this deformation favors area-increasing moves that
makes the expected height in the middle of the Motzkin walks proportional
to $n$(instead of $\sqrt{n}$). Consequently, the left half of the
chain has about order $n$ step ups whose corresponding down steps
occur on the right half of the chain. The exponential number of possible
colors of the step ups on the left hand side introduces a very large
correlation between the two halves of the chain. This results in a
highly entangled ground state where the entropy $S=\Theta(n)$.

In \cite{zhang2016quantum} they prove that the unique ground state
of $H(t)$ is $|GS\rangle$. They found that the half-chain entanglement
entropy as a function of $t$ and $s$ behaves as 
\[
S_{n}=\left\{ \begin{array}{ccc}
\Theta(n) & \quad & t>1,\text{ }s>1\\
O(1) &  & t<1.
\end{array}\right.
\]
Note that at $t=1$ and $s=1$ the model coincides with \cite{Movassagh2012_brackets}
and has $S_{n}=\Theta(\log(n))$, whereas for $t=1$ and $s>1$, it
coincides with \cite{MovassaghShor2016} that has $S_{n}=\Theta(\sqrt{n})$. 

The intuition behind the $t-$deformation is that one wants to increase
the expected height in the middle of the chain to be $\Theta(n)$.
To do so, they multiply the ket in the projector that favors a local
increase of the area with a parameter $t>1$. This would prefer the
moves that increase the area and consequently push the whole walk
up.

We now prove that the gap of this Hamiltonian for $t>1$ is $\Delta\le8ns\text{ }t^{-n^{2}/3}$.

\subsection{\label{subsec:Klitch}Reduction to the gap to that of an underlying
Markov chain}

There is a general mapping between Stoquastic and frustration free
(FF) local Hamiltonians and classical Markov chains with Glauber dynamics
\cite{bravyi2009complexity}.

Using Perron-Frobenius theorem, Bravyi and Terhal showed that the
ground state $|GS\rangle$ of any stoquastic FF Hamiltonian can be
chosen to be a vector with non-negative amplitudes on a standard basis
\cite{bravyi2009complexity}, i.e., $x\in\Omega$ such that $\langle x|GS\rangle>0$.

Fix an integer $s\geq2$ and a real parameter $t>1$. An $s$-colored
Motzkin walk of length $2n$ is a sequence $x=(x_{1},\ldots,x_{2n})$
with each $x_{j}\in\left\{ u^{1},\dots,u^{s},0,d^{1},\dots,d^{s}\right\} $
starting and ending at height zero with matched colors. 

We now restrict the Hamiltonian to the Motzkin subspace, i.e., span
of $\Omega$ and upper bound the gap in this subspace, which serves
as an upper bound on the gap of the full Hamiltonian. Following \cite[Sec. 2.1]{bravyi2009complexity}
we define the transition matrix that defines a random walk on $\Omega$
\begin{equation}
P(x,y)=\delta_{x,y}-\beta\sqrt{\frac{\pi(y)}{\pi(x)}}\langle x|H|y\rangle\label{eq:P_General-1}
\end{equation}
where $\beta>0$ is real and chosen such that $P(x,y)\ge0$ and $\pi(x)=\langle x|GS\rangle^{2}$
is the stationary distribution. The eigenvalue equation $(\mathbb{I}-\beta H)|GS\rangle=|GS\rangle$
implies that $\sum_{y}P(x,y)=1$. Therefore, the matrix $P$ really
defines a random walk on $\Omega$. 

Below we prove that $P$ is a symmetric Markov chain with a unique
stationary distribution, which means the eigenvalues of $P$ are real
and can be ordered as $1=\lambda_{1}>\lambda_{2}\geq\cdots.$

Recall that the spectral gap of a Markov chain is the difference of
its two largest eigenvalues, i.e. $1-\lambda_{2}(P)$. Also the gap
of $H$ is equal to the gap of the matrix
\[
M(x,y)=\sqrt{\frac{\pi(y)}{\pi(x)}}\langle x|H|y\rangle
\]
because $H$ and $M$ are related by a diagonal similarity transformation.
The minus sign in Eq. \ref{eq:P_General-1} enables us to obtained
the energy gap of the FF Hamiltonian from the spectral gap of the
Markov chain $P$
\begin{equation}
\Delta(H)=\frac{1-\lambda_{2}(P)}{\beta}.\label{eq:GapDefinition-1}
\end{equation}

Let us take $1/\beta=\frac{1}{2ns}\frac{1+t^{2}}{t^{2}}$, whereby
the entries of the matrix $P$ in Eq. \ref{eq:P_General-1} become
\begin{equation}
P(x,y)=\delta_{x,y}-\frac{1}{2ns}\frac{1+t^{2}}{t^{2}}\sqrt{\frac{\pi(y)}{\pi(x)}}\langle x|H|y\rangle.\label{eq:P}
\end{equation}

Therefore, an upper bound on the gap of the Markov chain provides
us with an upper bound on the gap of $H$ via
\begin{equation}
\Delta(H)=\frac{2nst^{2}}{1+t^{2}}(1-\lambda_{2}(P)).\label{eq:GapRelations}
\end{equation}

After some preliminaries, we show that $P$ is a transition matrix
of a reversible Markov chain with the unique stationary distribution 
\begin{equation}
\pi(x)\equiv\frac{t^{2{\cal A}(x)}}{Z},\label{eq:pi_x}
\end{equation}
where $Z=\sum_{x\in\Omega_{2n}}t^{2\mathcal{\mathcal{A}}(x)}$ as
above and obtain an upper bound on $1-\lambda_{2}(P)$. 

Clearly $\sum_{x}\pi(x)=1$ and $0<\pi(x)\le1$. Let us analyze $\langle x|H|y\rangle$,
where $x=x_{1}x_{2}\dots x_{2n}$ is seen as a string with $x_{i}\in\{u^{1},\dots,u^{s},0,d^{1},\dots,d^{s}\}$.
Similarly for the string $y$. 

Since $\Pi_{j,j+1}(t)$ acts locally, it should be clear that $\langle x|H|y\rangle=0$
unless the strings $x$ and $y$ coincide everywhere except from at
most two consecutive positions. That is $\langle x|H|y\rangle\ne0$
if and only if there exists a $j$ such that $\langle x|\Pi_{j,j+1}(t)|y\rangle\ne0$,
which in turn means $x_{1}=y_{1},$$x_{2}=y_{2}$, ... , $x_{j-1}=y_{j-1}$,
$x_{j+2}=y_{j+2}$, $\dots$, $x_{2n}=y_{2n}$. Note that here $x_{j}x_{j+1}$
may or may not be equal to $y_{j}y_{j+1}$. 

In the balanced ground subspace we have 
\begin{equation}
\langle x|H|y\rangle=\langle x|\left\{ \sum_{j=1}^{2n-1}\mathbb{I}_{d^{j-1}}\otimes\Pi_{j,j+1}(t)\otimes\mathbb{I}_{d^{2n-j-1}}\right\} |y\rangle=\sum_{j=1}^{2n-1}\langle x_{j}x_{j+1}|\Pi_{j,j+1}(t)|y_{j}y_{j+1}\rangle.\label{eq:xHy}
\end{equation}

If $x_{j}x_{j+1}=y_{j}y_{j+1}$ , we have a 'diagonal term', and if
$\langle x|H|y\rangle\ne0$ and $x_{j}x_{j+1}\ne y_{j}y_{j+1}$ we
have a 'nonzero off-diagonal term'. The latter occurs if and only
if $x_{j}x_{j+1}\text{ and }y_{j}y_{j+1}$ are related by a local
move. 

Let us look at the largest diagonal term $\langle x|H|x\rangle$,
where $x=y=00\cdots0$. We have 
\[
\sum_{j=1}^{2n-1}\langle x_{j}x_{j+1}|\Pi_{j,j+1}(t)|y_{j}y_{j+1}\rangle=\sum_{j=1}^{2n-1}\langle x_{j}x_{j+1}|\left\{ \sum_{k=1}^{s}|\varphi^{k}(t)\rangle_{j,j+1}\langle\varphi^{k}(t)|\right\} |x_{j}x_{j+1}\rangle,
\]
because only $|\varphi(t)\rangle$ has a $|00\rangle$ in it. The
other projectors vanish. Now Eq. \ref{eq:xHy} becomes
\begin{eqnarray}
|\varphi^{k}(t)\rangle_{j,j+1}\langle\varphi^{k}(t)| & = & \frac{1}{1+t^{2}}\left[|u^{k}d^{k}\rangle-t\text{ }|00\rangle\right]\left[\langle u^{k}d^{k}|-t\text{ }\langle00|\right]\nonumber \\
 & = & \frac{1}{1+t^{2}}\left[|u^{k}d^{k}\rangle\langle u^{k}d^{k}|-t\text{ }|00\rangle\langle u^{k}d^{k}|-t|u^{k}d^{k}\rangle\langle00|+t^{2}\text{ }|00\rangle\langle00|\right].\label{eq:expanded}
\end{eqnarray}
So we have  $\langle00|\left\{ |\varphi^{k}(t)\rangle_{j,j+1}\langle\varphi^{k}(t)|\right\} |00\rangle=\frac{t^{2}}{1+t^{2}}$
and 
\[
\langle00\dots0|\sum_{j=1}^{2n-1}\left\{ \sum_{k=1}^{s}|\varphi^{k}(t)\rangle_{j,j+1}\langle\varphi^{k}(t)|\right\} |00\dots0\rangle=\frac{(2n-1)st^{2}}{1+t^{2}}.
\]
The smallest non-zero diagonal term in $\langle x|H|x\rangle$ corresponds
to a tent shape, where for some $1\le k\le s$, $x_{j}=u^{k}$ for
$j<n$ and $x_{j}=d^{k}$ for $j>n+1$, and $x_{n}x_{n+1}=u^{k}d^{k}$.
In this case, Eq. \ref{eq:xHy} reads
\[
\sum_{j=1}^{2n-1}\langle x|\left\{ \sum_{k=1}^{s}|\varphi^{k}(t)\rangle_{j,j+1}\langle\varphi^{k}(t)|\right\} |x\rangle=\langle u_{n}^{k}d_{n+1}^{k}|\varphi^{k}(t)\rangle_{n,n+1}\langle\varphi^{k}(t)|u_{n}^{k}d_{n+1}^{k}\rangle=\frac{1}{1+t^{2}}.
\]
Based on this analysis and using the definition of $P$ given by Eq.
\ref{eq:P} we conclude
\[
0<P(x,x)\le1-\frac{1}{2nst^{2}}.
\]

Next consider the off diagonal terms. Suppose $x$ and $y$ coincide
everywhere except at $j,j+1$ position. In this case $\langle x|H|y\rangle\ne0$
iff $x_{j}x_{j+1}$ and $y_{j}y_{j+1}$ are related by a local move.
For example suppose $x_{j}x_{j+1}=00$ and $y_{j}y_{j+1}=u^{2}d^{2}$,
then the only term that survives in Eq. \ref{eq:expanded} is $-t\text{ }|00\rangle\langle u^{2}d^{2}|$
in the Hamiltonian. So we have 
\[
\langle00|\left\{ |\varphi^{k}(t)\rangle_{j,j+1}\langle\varphi^{k}(t)|\right\} |u^{2}d^{2}\rangle=-\frac{t}{1+t^{2}}\langle00\text{ }|00\rangle\langle u^{2}d^{2}|u^{2}d^{2}\rangle=-\frac{t}{1+t^{2}}.
\]
Since any local move either increases or decreases the area under
the walk by exactly one, in Eq. \ref{eq:pi_x} we have 
\[
\sqrt{\frac{\pi(y)}{\pi(x)}}=\left\{ \begin{array}{ccc}
t &  & x\rightarrow y\text{ by a local move that increases the area}\\
1/t &  & x\rightarrow y\text{ by a local move that decreases the area}
\end{array}\right.
\]

Therefore, in Eq. \ref{eq:P} the nonzero off diagonal terms in which
$x\ne y$ are 
\[
P(x,y)=-\frac{1}{2ns}\frac{1+t^{2}}{t^{2}}\sqrt{\frac{\pi(y)}{\pi(x)}}\langle x|H|y\rangle
\]
 and the bounds for such terms are
\[
0\le P(x,y)\le\frac{1}{2nst^{2}}.
\]

$P$ is a stochastic matrix as the following shows
\[
\sum_{y}P(x,y)=1-\frac{1}{(2n-1)s}\frac{1+t^{2}}{t^{2}}\frac{1}{\sqrt{\pi(x)}}\langle x|H\sum_{y}\frac{t^{{\cal A}(y)}}{\sqrt{Z}}|y\rangle=1,
\]
where we used $|GS\rangle=\sum_{y}\frac{t^{{\cal A}(y)}}{\sqrt{Z}}|y\rangle$
and the fact that $H|GS\rangle=0$. 

It is easy to see that $P(x,y)$ satisfies detailed balance. Indeed,
since $\langle x|H|y\rangle=\langle y|H|x\rangle$, we have $\pi(x)P(x,y)=\pi(y)P(y,x)$.
Therefore, $\pi(x)$ is a stationary distribution. Since any two Motzkin
walks can be connected by a sequence of local moves \cite{zhang2016quantum}
$P$ is irreducible, lastly $P(x,x)>0$, and the chain is aperiodic.
We conclude that $\pi(x)$ is the unique stationary distribution that
the  chain converges to. 
\subsection{Upper bound on the gap}
For any subset of states $A\subset\Omega$ denote $\pi(A)$ by $\pi(A)\equiv\sum_{x\in A}\pi(x).$
To show that $P$ has a small spectral gap, we are going to identify
two large sets ($S,S'$ below) separated by a small bottleneck ($B$
below) and apply Cheeger's inequality (see Eq. \ref{eq:cheeger} below).
\begin{figure}
\begin{centering}
\includegraphics[scale=0.4]{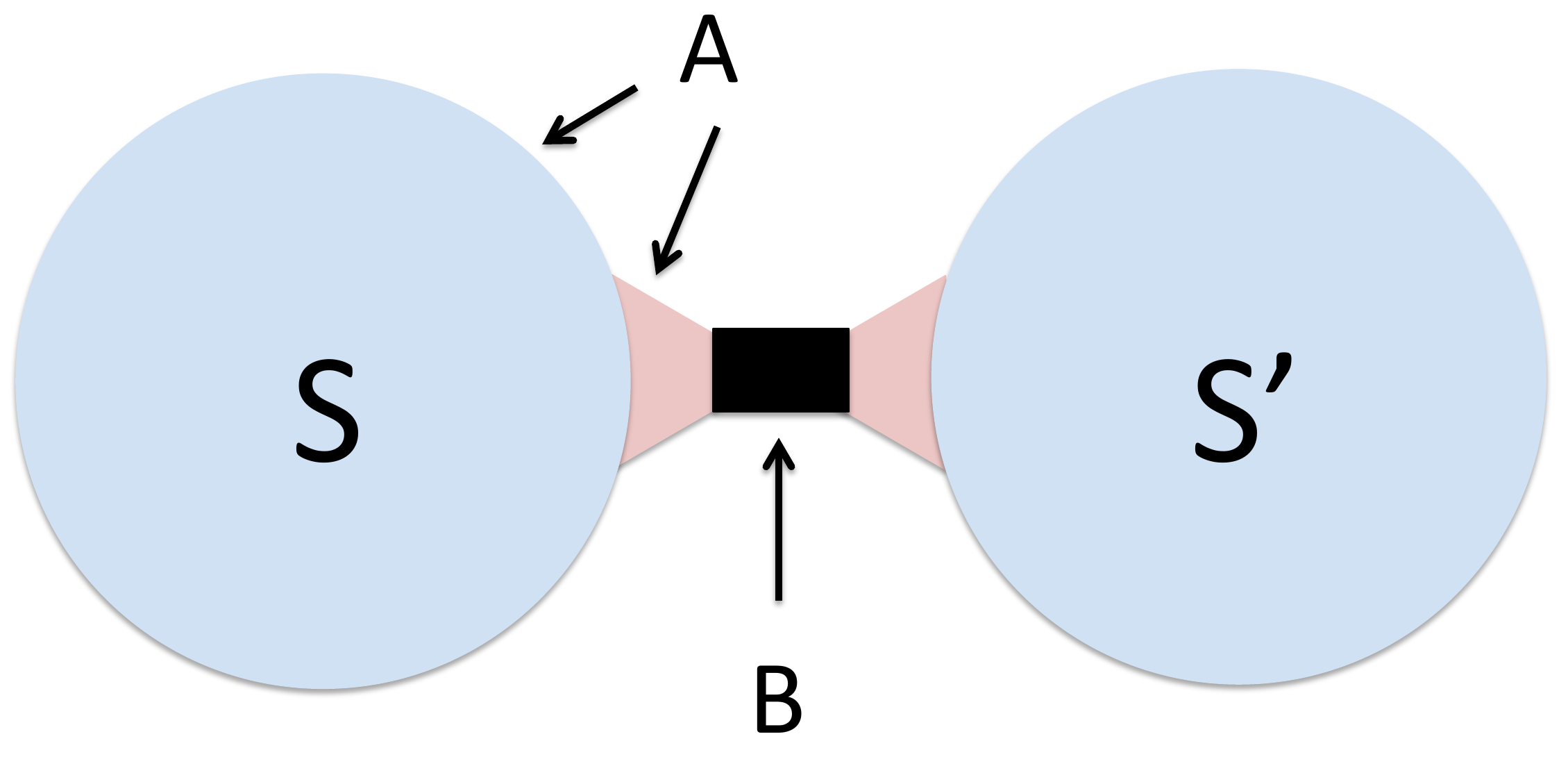}
\par\end{centering}
\caption{The space $\Omega$ of colored Motzkin paths contains two large sets
$S$and $S'$ separated by a small bottleneck $B$. Lemma \ref{lem:big}
below verifies that $S$ and $S'$ are large; Lemma \ref{lem:bottleneck}
that $B$ separates $S$ from $S'$; and Lemma \ref{lem:Bsmall} that
$B$ is small. The proof is completed by applying Cheeger's inequality
to the set $A$ consisting of all paths reachable by a sequence of
local moves starting in $S$ without passing through $B$.}

\end{figure}

\begin{defn*}
For $a\geq0$ let $D_{a}=\left\{ x\in\Omega_{2n}:\mathcal{A}(x)=n^{2}-a\right\} $
be the set of $s$-colored Motzkin walks of length $2n$ with area
defect $a$ whose size we denote by $|D_{a}|$. Let $p(a)$ be the
number of partitions of the integer $a$.
\end{defn*}
\begin{figure}
\includegraphics[scale=0.33]{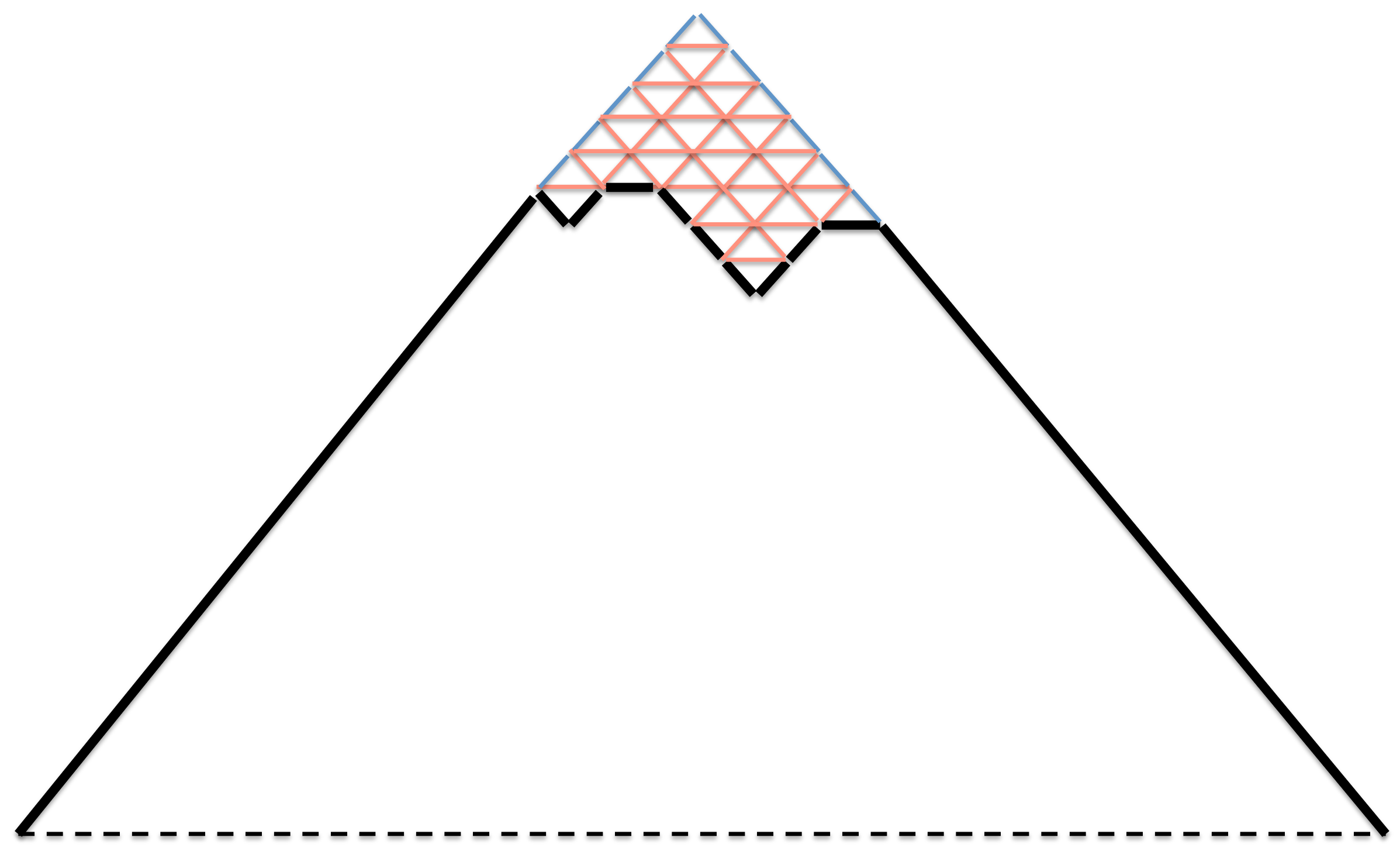}\includegraphics[scale=0.3]{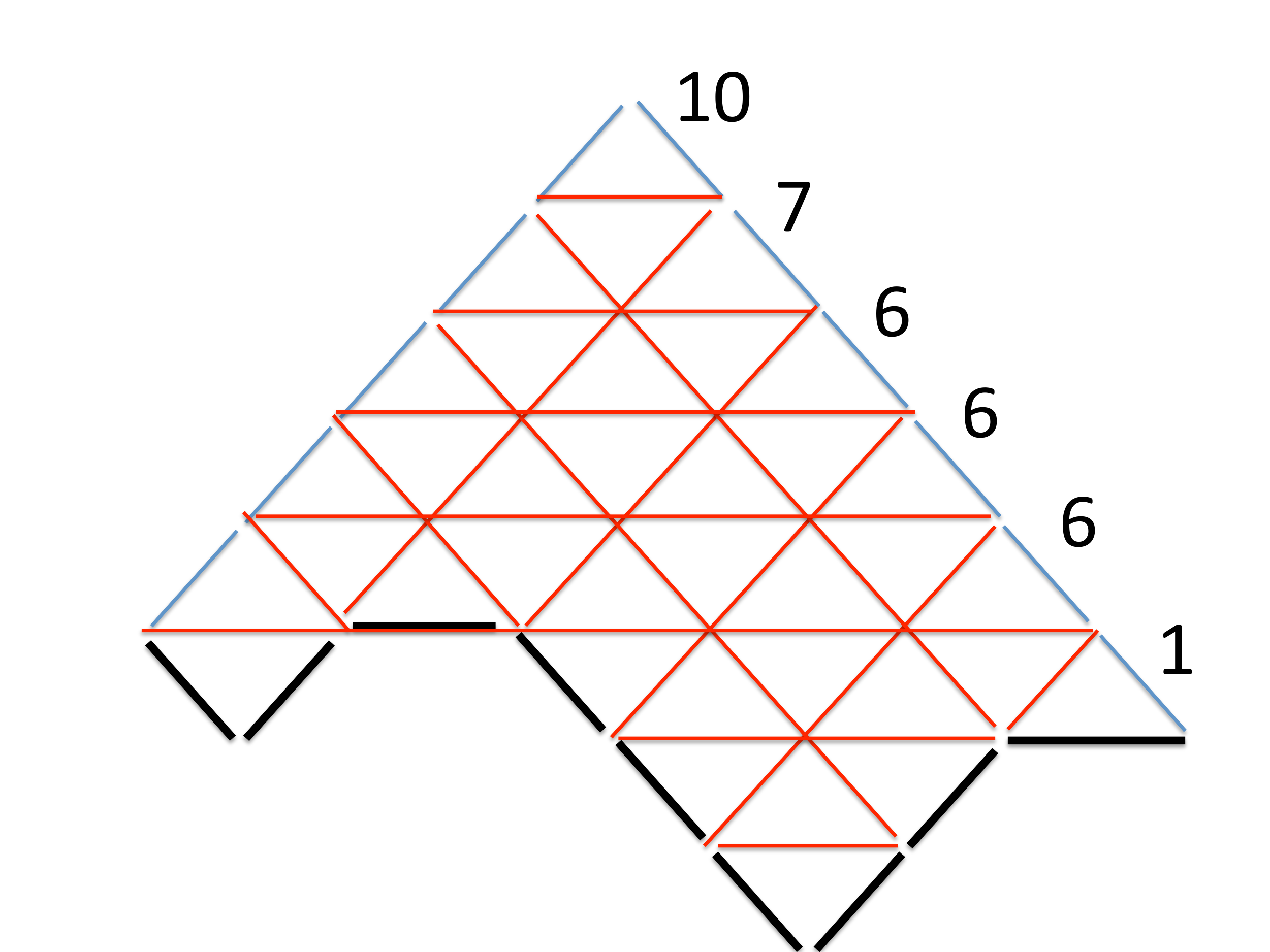}

\caption{\label{fig:Ramanujan}Example of a Motzkin path in $D_{a}$ with area
defect $a=10+7+6+6+6+1$. Its complementary partition is zoomed in
and shown on the right.}

\end{figure}

To verify that $S$ and $S'$ are large and that $B$ is small, we
will use the following lemma, which says that the stationary distribution
$\pi$ concentrates on walks with small area defect. 
\begin{lem}
\label{lem:defect}$\pi(D_{a})<p(a)t^{-2a}<t^{-a}$
\end{lem}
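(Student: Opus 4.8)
The plan is to bound $\pi(D_a)=\sum_{x\in D_a}\pi(x)=|D_a|\,t^{-2a}/Z$ by combining two facts: a lower bound $Z\geq t^{2n^2}$ coming from a single maximal-area walk, and an upper bound $|D_a|\leq p(a)$ on the number of colored Motzkin walks with area defect $a$. Granting both, $\pi(D_a)=|D_a|t^{-2a}/Z \leq |D_a| t^{2(n^2-a)}/t^{2n^2} = |D_a|t^{-2a}\leq p(a)t^{-2a}$, and since $p(a)\leq t^{a}$ for $t>1$ (indeed $p(a)\leq 2^a\leq t^a$ once $t\geq 2$, and more care is needed for $1<t<2$ — see the obstacle below) we get $\pi(D_a)<t^{-a}$.

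First I would pin down $Z$. The walk that is identically "up for $n$ steps, then down for $n$ steps" (a single tent, any fixed color) has area exactly $n^2$ (it is $1+2+\cdots+n$ on the way up plus $n-1+\cdots+1$ on the way down... one should just recompute: the area under a tent of half-width $n$ is $n^2$), so $Z=\sum_{x}t^{2\mathcal A(x)}\geq t^{2n^2}$. This also confirms that $n^2$ is the maximal possible area on $2n$ steps, so $D_a$ is empty for $a<0$ and the area defect $a\geq 0$ is well-defined.

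Next, the combinatorial heart: $|D_a|\leq p(a)$. The idea is that a Motzkin walk of maximal area is forced to be the tent, and a walk with small defect $a$ is a small perturbation of the tent whose "missing area" decomposes as a partition of $a$. Concretely, given $x\in D_a$, look at the region between the graph of $x$ and the graph of the tent (or equivalently, encode the heights $h_0,h_1,\ldots,h_{2n}$ and compare to the tent heights); the complementary region, read off column by column, gives a weakly decreasing sequence of nonnegative integers summing to $a$, i.e.\ a partition of $a$ (this is exactly what Fig.~\ref{fig:Ramanujan} illustrates). One must check this encoding is injective — that the walk can be reconstructed from its complementary partition — and that the color data does not multiply the count: because down-steps must match their up-steps in color and the "defect" partition has size $a<$ (number of steps that can be non-tent), a walk of defect $a$ can differ from the pure tent only in $O(a)$ places, but crucially near the maximal-area configuration the non-flat steps form a single block, so the coloring is essentially determined (or at worst contributes a factor absorbed into going from $p(a)t^{-2a}$ to $t^{-a}$). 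I would verify that for $a$ small relative to $n$ the colored count is still at most $p(a)$, which is what the lemma asserts.

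The main obstacle I expect is twofold. First, making the injection $D_a\hookrightarrow\{\text{partitions of }a\}$ fully rigorous, including handling the $s$ colors without picking up an extra factor like $s^{O(a)}$ — this needs the observation that in a near-maximal walk the colored up-steps and their matched down-steps are rigidly placed. Second, the inequality $p(a)<t^{a}$ for all $t>1$ and all $a\geq 1$ is false for $t$ close to $1$ and large $a$ (since $p(a)=e^{\Theta(\sqrt a)}$ grows, but $t^a$ eventually beats it — so actually it \emph{is} true for each fixed $t>1$ once $a$ is large, but fails for small-to-moderate $a$ when $\log t$ is tiny); so the clean bound $\pi(D_a)<t^{-a}$ as stated must be read with the understanding that it is the second, weaker inequality in the lemma and the paper only ever applies it for $a$ in a range where it holds, or absorbs constants elsewhere. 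I would keep the proof centered on the sharp bound $\pi(D_a)\leq p(a)t^{-2a}$ and treat $p(a)t^{-2a}<t^{-a}\iff p(a)<t^{a}$ as the elementary estimate to be invoked only where legitimate.
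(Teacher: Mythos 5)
Your overall strategy (defect-region gives a partition of $a$, lower bound $Z$ by maximal-area tents, then invoke subexponential growth of $p(a)$) is the paper's strategy, but your handling of the colors contains a genuine error. The claimed combinatorial bound $|D_a|\le p(a)$ is false: every up step of a colored Motzkin walk may be colored arbitrarily in $s$ ways (only the matched down steps are constrained), so already the exact tent with defect $a=0$ contributes $|D_0|=s^n$ walks, not $p(0)=1$. Your attempted rescue --- that ``near the maximal-area configuration the coloring is essentially determined'' or that the extra factor can be ``absorbed into going from $p(a)t^{-2a}$ to $t^{-a}$'' --- does not work: the true count is $|D_a|\le p(a)\,s^{n}$, and a factor $s^{n}$ cannot be absorbed into $t^{-a}$ in the range where the lemma is actually used. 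For instance, Lemma \ref{lem:big} applies the bound with $a\approx 2n$, where absorbing $s^{n}$ would require roughly $s^{n}p(a)<t^{2n}$, which fails whenever $1<t<\sqrt{s}$. The paper closes this gap not by sharpening the count of colored walks but by sharpening the lower bound on $Z$: it uses \emph{all} $s^{n}$ colored tent-shaped paths, giving $Z\ge s^{n}t^{2n^{2}}$, so that
\[
\pi(D_a)\;=\;\frac{|D_a|\,t^{2(n^2-a)}}{Z}\;\le\;\frac{p(a)\,s^{n}\,t^{2(n^2-a)}}{s^{n}\,t^{2n^2}}\;=\;p(a)\,t^{-2a},
\]
with the $s^{n}$ factors cancelling. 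Your single-tent bound $Z\ge t^{2n^2}$ loses exactly this factor and cannot recover the stated inequality $\pi(D_a)<p(a)t^{-2a}$.

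Your second concern --- that $p(a)<t^{a}$ fails for small $a$ when $t$ is close to $1$ --- is legitimate, and it matches the paper's own reading of the lemma: the paper invokes the Hardy--Ramanujan asymptotics and asserts $\pi(D_a)<t^{-a}$ only ``for all sufficiently large $a$,'' which suffices because the lemma is applied only with $a\ge 2(n-1)$ and $a\ge \tfrac12 n^2-2n$ for large $n$. So that part of your discussion is fine; the missing ingredient is the $s^{n}$-cancellation between $|D_a|$ and $Z$.
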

\begin{proof}
The shape of a walk $x\in D_{a}$ is determined by a partition of
$a$, whose triangular diagram lies above $x$ (see Fig. \ref{fig:Ramanujan}).
This diagram is analogous to a Young diagram, with triangles instead
of squares. Since $x$ has at most $n$ up steps, each of which can
be colored arbitrarily in $s$ different ways, we have $|D_{a}|\leq p(a)s^{n}$
where $p(a)$ is the number of partitions of $a$. Using only the
$s^{n}$ tent-shaped paths with full area $n^{2}$ to lower bound
$Z$, we obtain.
\[
\pi(D_{a})=\frac{|D_{a}|t^{2(n^{2}-a)}}{Z_{2n}}<\frac{s^{n}p(a)t^{2(n^{2}-a)}}{s^{n}t^{2n^{2}}}=p(a)t^{-2a}.
\]
The asymptotic form of the number of partitions, $p(a)$, is given
by Hardy-Ramanujan formula, which is sub-exponential with respect
to $a$ \cite{hardy1918asymptotic}:
\[
p(a)\sim\frac{1}{4\sqrt{3}a}\exp\left(\pi\sqrt{2a/3}\right).
\]
In particular, for fixed $t>1$ we have for all sufficiently large
$a$,
\begin{equation}
\pi(D_{a})=\frac{1}{4\sqrt{3}a}\exp\left(-2a\log t+\pi\sqrt{2a/3}\right)<t^{-a}.\label{eq:defect}
\end{equation}
\end{proof}
\begin{defn}
Let $h_{j}(x)$ denote the height of $x$ on the $j^{th}$ step. We
say $x$ is \emph{prime }if $h_{j}(x)>0$ for all $1<j<2n$. Denote
by $\Lambda$ the set of all prime colored Motzkin paths in $\Omega$. 
\end{defn}
In any prime Motzkin walk, the up step in position $1$ must have
the same color as the down step in position $2n$. Hence $\Lambda=S\cup S'$
where
\[
S\equiv\left\{ x\in\Lambda\text{ }:\,color(x_{1})=color(x_{2n})\le s/2\right\} ,
\]
\[
S'\equiv\left\{ x\in\Lambda\text{ }:\,color(x_{1})=color(x_{2n})>s/2\right\} .
\]
Comment: Possibility of $s$ being odd will not hurt the arguments
below. 

We now show that nonprime paths have small areas. Let $\Omega_{j}\times\Omega_{2n-j}$
for the set of concatenations $\tilde{x}\tilde{y}$ where $\tilde{x}\in\Omega_{j}$
and $\tilde{y}\in\Omega_{2n-j}$. Since $\mathcal{A}(\tilde{x}\tilde{y})=\mathcal{A}(\tilde{x})+\mathcal{A}(\tilde{y})$
and $\max_{\tilde{x}\in\Omega_{j}}\mathcal{A}(\tilde{x})=j^{2}/4$,
we have
\[
\max_{x\in\Omega_{j}\times\Omega_{2n-j}}\mathcal{{A}}(x)\leq\frac{j^{2}+(2n-j)^{2}}{4}.
\]

\begin{lem}
\label{lem:big}For any fixed $s\geq2$ and $t>1$, we have for all
sufficiently large $n$,
\[
\pi(S')\geq\pi(S)>\frac{1}{4}.
\]
\end{lem}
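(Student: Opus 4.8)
The plan is to prove the two inequalities separately, the second one---$\pi(S)>\tfrac14$---being the real content, and to reduce it to the single statement that the prime walks carry almost all of the stationary mass: $\pi(\Lambda)\to1$ as $n\to\infty$.

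For the first inequality and for the reduction, I will exploit the color symmetry. Since $\mathcal A(x)$ does not depend on the colors of the steps of $x$, recoloring all matched up/down pairs of a walk by a fixed permutation $\sigma$ of the colors $\{1,\dots,s\}$ is a bijection of $\Omega$ onto itself that preserves $\mathcal A$, hence preserves $\pi$; it also preserves primeness and sends a prime walk with $color(x_1)=c$ to one with $color(x_1)=\sigma(c)$. Consequently the mass $q:=\pi\{x\in\Lambda:\,color(x_1)=c\}$ is the same for every $c\in\{1,\dots,s\}$. As $\Lambda$ is the disjoint union of these $s$ classes, $\pi(\Lambda)=sq$, while $\pi(S)=\lfloor s/2\rfloor\,q=\tfrac{\lfloor s/2\rfloor}{s}\pi(\Lambda)$ and $\pi(S')=\lceil s/2\rceil\,q=\tfrac{\lceil s/2\rceil}{s}\pi(\Lambda)\ge\pi(S)$, which already gives $\pi(S')\ge\pi(S)$. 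Since $\tfrac{\lfloor s/2\rfloor}{s}\ge\tfrac13$ for every integer $s\ge2$, it suffices to establish $\pi(\Lambda)>\tfrac34$ for all large $n$, which then yields the strict bound $\pi(S)\ge\tfrac13\pi(\Lambda)>\tfrac14$.

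It remains to bound the mass of the non-prime walks. If $x\in\Omega_{2n}$ is not prime then $h_j(x)=0$ for some $j$ with $1\le j\le2n-1$, so $x\in\Omega_j\times\Omega_{2n-j}$ and, by the concatenation bound recorded just before the lemma,
\[
\mathcal A(x)\le\frac{j^2+(2n-j)^2}{4}\le\frac{1+(2n-1)^2}{4}=n^2-n+\tfrac12 .
\]
Because $\mathcal A(x)$ is an integer (the up and down steps, being equal in number, contribute canceling half-integers), in fact $\mathcal A(x)\le n^2-n$, i.e., $x$ has area defect at least $n$. Hence $\Omega\setminus\Lambda\subseteq\bigcup_{a\ge n}D_a$, and Lemma \ref{lem:defect} gives $1-\pi(\Lambda)=\pi(\Omega\setminus\Lambda)\le\sum_{a\ge n}\pi(D_a)<\sum_{a\ge n}p(a)t^{-2a}$. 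For fixed $t>1$ this is the tail of the convergent series $\sum_{a\ge0}p(a)t^{-2a}$ (convergent because $p(a)$ is sub-exponential), so it tends to $0$; in particular $\pi(\Lambda)>\tfrac34$ for all sufficiently large $n$, and combining with the previous paragraph finishes the proof.

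I expect no serious obstacle. The one step that uses geometry rather than bookkeeping is the observation that a non-prime walk splits at a return to height $0$ and therefore lands inside a low-area piece $\Omega_j\times\Omega_{2n-j}$ of area defect at least $n$; once that is in place, Lemma \ref{lem:defect} together with the color symmetry does everything, the only mild care being the integrality of $\mathcal A$ and the case of odd $s$.
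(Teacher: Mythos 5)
Your proposal is correct and follows essentially the same route as the paper: the color-permutation symmetry gives $\pi(S')\ge\pi(S)\ge\tfrac13\pi(\Lambda)$, and non-prime walks are forced to have a large area defect, so Lemma \ref{lem:defect} makes $\pi(\Omega\setminus\Lambda)$ vanish and yields $\pi(\Lambda)>\tfrac34$. The only (immaterial) difference is bookkeeping: you bound the defect of non-prime walks below by $n$ (via the worst split $j=1$ and integrality of $\mathcal{A}$) and sum $p(a)t^{-2a}$ directly, while the paper uses the sharper defect bound $2(n-1)$ and the cruder tail estimate $\pi(D_a)<t^{-a}$.
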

\begin{proof}
The first inequality follows from the fact that $S'$ includes at
least as many colors as $S$. By symmetry, $\pi\left\{ x\in\Lambda\text{ }:\,color(x_{1})=color(x_{2n})=i\right\} $
does not depend on $i\in\left\{ 1,\ldots,s\right\} $, and in the
worst case where $s=3$ and $\pi(S)$ is the smallest fraction of
$\pi(\Lambda)$ we have
\[
\pi(S')\geq\pi(S)\geq\frac{1}{3}\pi(\Lambda)
\]
If $x\notin\Lambda$ then $x$ must have a step of height zero in
some position $1<j<2n$. Then ${\cal A}(x)\leq n^{2}-2n+2$; the bound
is saturated when $j=2$ and we have a concatenation of two tents
of heights one and $n-1$. By Eq. \ref{eq:defect}, we obtain for
sufficiently large $n$
\begin{equation}
\pi(\Omega-\Lambda)\leq\sum_{a\geq2(n-1)}\pi(D_{a})<\sum_{a\geq2(n-1)}t^{-a}\approx\frac{t^{-2(n-1)}}{1-t}.
\end{equation}
For sufficiently large $n$ the right side is less than $\frac{1}{4}$,
so $\pi(\Lambda)>\frac{3}{4}$ which completes the proof.
\end{proof}
Next we identify the small bottleneck:
\begin{eqnarray*}
B & \equiv & \left\{ \text{colored Motzkin paths with an up step of height zero in position \ensuremath{n}}\right\} \cup\\
 &  & \left\{ \text{colored Motzkin paths with a down step of height zero in position \ensuremath{n+1}}\right\} .
\end{eqnarray*}
\begin{lem}
\label{lem:bottleneck}Any sequence of local moves from S to S' must
pass through B.
\end{lem}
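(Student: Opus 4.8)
The plan is to produce a color-valued invariant, defined on every colored Motzkin path outside $B$, that takes different values on $S$ and on $S'$. For a path $x$ with $h_n(x)\ge 1$ set $m(x)=\max\{\,i : 0\le i\le n,\ h_i(x)=0\,\}$. Then $h_{m(x)}(x)=0<h_{m(x)+1}(x)$, so $x_{m(x)+1}$ is an up step, and we set $\tau(x):=\mathrm{color}(x_{m(x)+1})$; geometrically $\tau(x)$ is the color of the outermost arch of $x$ that straddles the central bond between positions $n$ and $n+1$. If $x\in S$ then $x$ is prime and $\mathrm{color}(x_1)\le s/2$ forces $x_1$ to be an up step, so $h_i(x)\ge 1$ for $1\le i\le 2n-1$; hence $m(x)=0$ and $\tau(x)=\mathrm{color}(x_1)\le s/2$. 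In the same way $\tau(x')=\mathrm{color}(x'_1)>s/2$ for every $x'\in S'$. Primeness also gives $h_{n-1},h_{n+1}>0$, so $S\cup S'$ is disjoint from $B$ and the statement is not vacuous.

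It then suffices to establish two stability properties of a single local move $x\to y$ with $x,y\notin B$: (i) $h_n(x)\ge 1$ if and only if $h_n(y)\ge 1$; and (ii) if $h_n(x),h_n(y)\ge 1$, then $\tau(x)=\tau(y)$. Granting these, suppose some sequence of local moves goes from a path in $S$ to a path in $S'$ without ever entering $B$. A local move preserves the net displacement across the two sites it touches and fixes all other steps, so it changes exactly one height, and by $\pm1$. Since $h_n\ge 1$ at the start (primeness again), property (i) propagates $h_n\ge 1$ along the whole sequence, and then property (ii) forces $\tau$ to be constant along it; but $\tau\le s/2$ at the start and $\tau>s/2$ at the end, a contradiction. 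Hence every sequence of local moves from $S$ to $S'$ meets $B$.

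For (i): a move alters only the single height $h_j$ at the bond $(j,j+1)$ on which it acts, so $h_n$ can change only when $j=n$, and to move $h_n$ between the values $0$ and $1$ the move at $(n,n+1)$ must be one of $0u^k\leftrightarrow u^k0$, $00\leftrightarrow u^kd^k$, or $0d^k\leftrightarrow d^k0$. In each case the endpoint with $h_n=1$ lies in $B$: it has an up step of height zero in position $n$ (for the first two moves) or a down step of height zero in position $n+1$ (for the third) — a contradiction, so $h_n$ cannot cross between $0$ and $1$. For (ii), assume $h_n(x),h_n(y)\ge 1$. Then $m(x)\le n-1$, and $m(x)=n-1$ is impossible, since $h_{n-1}(x)=0$ with $h_n(x)\ge 1$ would make $x_n$ an up step of height zero in position $n$, i.e.\ $x\in B$. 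Hence $m(x)+1\le n-1$: the opening of the outermost straddling arch sits strictly left of the central bond, so that arch spans more than two sites and is untouched by any $00\leftrightarrow u^kd^k$ move. Running through the cases by the bond on which the move acts, one finds that either the move leaves $m$ and $x_{m+1}$ unchanged (hence $\tau$ unchanged), or it slides a flat step past the opening up step of the outermost straddling arch (a move $0u^k\leftrightarrow u^k0$), which shifts $m$ by $1$ but keeps that opening an up step of the same color $k$. So $\tau(x)=\tau(y)$ in every case.

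The bulk of the work, and the main obstacle, is in (ii): one must track how $m(x)=\max\{i\le n:h_i(x)=0\}$ and the step $x_{m(x)+1}$ respond to a local move, splitting into cases according to whether the move acts strictly left of, at, or across position $m(x)+1$, and according to whether it toggles some height $h_i$ with $i\le n$ between $0$ and a positive value. The point to be verified is that the only reconfigurations that could change the color of the outermost straddling arch — collapsing that arch itself, or inserting a new, larger arch at the central bond — are precisely those producing an up step of height zero in position $n$ or a down step of height zero in position $n+1$, hence lying in $B$. Property (i), by contrast, is a short finite check once one notes that a local move changes exactly one height and by $\pm1$.
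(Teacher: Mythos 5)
Your proposal is correct, but it takes a genuinely different route from the paper. The paper argues globally, by following the fate of specific steps: the matched ``blue'' pair at positions $1$ and $2n$ of a walk in $S$ must eventually annihilate via $u^{i}d^{i}\rightarrow00$, a ``red'' pair must be created at some bond $(j,j+1)$, and the red up step must then be transported by slides to position $1$ (or the red down step to position $2n$), so it crosses position $n$ (or $n+1$); that this crossing happens at height zero --- which is what membership in $B$ requires --- is left implicit (it follows because slides preserve the height at which a step sits). You instead construct a conserved quantity: with $m(x)$ the last return to zero before the midpoint and $\tau(x)$ the color of the up step at $m(x)+1$, you verify move by move that, as long as the walk avoids $B$, the property $h_{n}\ge1$ is preserved and $\tau$ is unchanged, while $\tau\le s/2$ on $S$ and $\tau>s/2$ on $S'$. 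This local-invariant formulation is, if anything, easier to make fully rigorous than the paper's narrative, since it reduces to a finite check per local move and makes explicit exactly where ``not in $B$'' is used; the paper's version is shorter and more intuitive. The one place your write-up is thinner than it should be is the sub-case of (ii) that you yourself flag: a move could a priori change $\tau$ by creating a new zero of the height at some $j$ with $m(x)<j\le n$. For $j<n$ this is impossible outright, because any move lowering $h_{j}$ to $0$ forces the pair at $(j,j+1)$ to be $u^{k}0$, $0d^{k}$ or $u^{k}d^{k}$ in $x$, hence $h_{j-1}=0$ or $h_{j+1}=0$ already in $x$, contradicting the maximality of $m(x)$; the case $j=n$ is excluded by your property (i). So the exclusion mechanism for interior $j$ is maximality of $m$, not membership in $B$ as your closing paragraph suggests; with that observation supplied, your case analysis closes and the argument is complete.
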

\begin{proof}
Any walk in $S$ starts with a step up of color $i\le s/2$ and is
matched by a down step of color $i$. For the sake of concreteness
call this color $i$ ``blue''. Any walk in $S'$ starts with an
up step and ends with a down step of color $\ell>s/2$, which for
the sake of concreteness we call ``red''. For the walk in $S$ to
go to the walk in $S'$ by a sequence of local moves, the up and down
blue steps would have to meet and annihilate (i.e. become $00$) at
some intermediate position. Then a matched pair of red up and down
steps would have to be created at some intermediate point $1<j<2n$.
Eventually the red step up moves all the way to the left (i.e., position
$1$) via a sequence of local moves and the red step down moves all
the way to the right (i.e., position $2n$). If $j\geq n$ then the
red up step would necessarily pass through position $n$; similarly
if $j<n$ the red down step would pass through the position $n+1$.
\end{proof}
\begin{lem}
\label{lem:Bsmall}For fixed $t>1$ we have $\pi(B)<t^{-n^{2}/3}$
for all sufficiently large $n.$
\end{lem}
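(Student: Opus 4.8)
The plan is to show that \emph{every} path in $B$ is forced to have a large area defect, and then to conclude immediately from Lemma~\ref{lem:defect}, which already controls the stationary mass of the walks with prescribed defect. Concretely, I will argue that $B\subseteq\bigcup_{a\geq(n^{2}-1)/2}D_{a}$, so that $\pi(B)\leq\sum_{a\geq(n^{2}-1)/2}\pi(D_{a})$, and then sum the resulting geometric series.

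First I would unpack the definition of $B$ in terms of a forced return to the axis. If $x$ has an up step of height zero in position $n$, then $x$ is at height zero at horizontal coordinate $n-1$: its first $n-1$ steps form a (colored) Motzkin path in $\Omega_{n-1}$ and its remaining $n+1$ steps form one in $\Omega_{n+1}$, so $x\in\Omega_{n-1}\times\Omega_{n+1}$. Symmetrically, if $x$ has a down step of height zero in position $n+1$, then $x$ is at height zero at coordinate $n+1$, so $x\in\Omega_{n+1}\times\Omega_{n-1}$. In either case, the inequality $\max_{x\in\Omega_{j}\times\Omega_{2n-j}}\mathcal{A}(x)\leq\tfrac{j^{2}+(2n-j)^{2}}{4}$ recorded above, applied with $j=n-1$, gives
\[
\mathcal{A}(x)\leq\frac{(n-1)^{2}+(n+1)^{2}}{4}=\frac{n^{2}+1}{2},
\]
hence an area defect $a=n^{2}-\mathcal{A}(x)\geq\frac{n^{2}-1}{2}$. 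Thus $B$ meets only the blocks $D_{a}$ with $a\geq\frac{n^{2}-1}{2}$.

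Next I would assemble the bound. Since the sets $\{D_{a}\}$ partition $\Omega$, Lemma~\ref{lem:defect} gives, for all $n$ large enough that the relevant defects lie in the range where $\pi(D_{a})<t^{-a}$,
\[
\pi(B)\leq\sum_{a\geq(n^{2}-1)/2}\pi(D_{a})<\sum_{a\geq(n^{2}-1)/2}t^{-a}\leq\frac{t}{t-1}\,t^{-(n^{2}-1)/2}.
\]
Finally, since $\frac{n^{2}-1}{2}-\frac{n^{2}}{3}=\frac{n^{2}-3}{6}\to\infty$ and $t>1$, the prefactor $\frac{t}{t-1}$ is eventually dominated by $t^{(n^{2}-3)/6}$, so the right-hand side is less than $t^{-n^{2}/3}$ for all sufficiently large $n$, which is the assertion.

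I do not anticipate a genuine obstacle: the entire content is the observation of the second paragraph, that a forced return to height zero at a central position confines the walk to a product $\Omega_{n-1}\times\Omega_{n+1}$ and hence costs an area defect quadratic in $n$. The only points needing a little care are (i) matching the convention ``up step of height zero in position $n$'' with ``$x$ returns to the axis at coordinate $n-1$'' (and the analogous statement for the down step at position $n+1$), and (ii) checking that the exponent $\frac{n^{2}-1}{2}$ comfortably exceeds $\frac{n^{2}}{3}$, so that the constant $\frac{t}{t-1}$ from the geometric series is harmlessly absorbed.
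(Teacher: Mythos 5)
Your proposal is correct and follows essentially the same route as the paper: every path in $B$ is forced to return to height zero near the middle, so the concatenation bound gives area at most about $n^{2}/2$, and then Lemma \ref{lem:defect} plus a geometric series yields $\pi(B)<t^{-n^{2}/3}$ since $n^{2}/2-O(n)$ eventually exceeds $n^{2}/3$. Your constant $\frac{(n-1)^{2}+(n+1)^{2}}{4}$ is in fact slightly sharper than the paper's $\frac{(n-2)^{2}+(n+2)^{2}}{4}$, but the difference is immaterial to the conclusion.
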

\begin{proof}
There is a quadratic area cost to being in $B$: Namely, for any $x\in B$
we have for $n>1$
\[
\mathcal{A}(x)\leq\frac{(n-2)^{2}+(n+2)^{2}}{4}<\frac{1}{2}n^{2}+2n.
\]
Hence by Eq. \ref{eq:defect} we have for sufficiently large $n$
\begin{equation}
\pi(B)<\sum_{a\geq\frac{1}{2}n^{2}-2n}\pi(D_{a})<\sum_{a\ge\frac{1}{2}n^{2}-2n}t^{-a}<\frac{t^{-\frac{1}{2}n^{2}+2n}}{1-t}<t^{-n^{2}/3}.
\end{equation}
\end{proof}
\begin{thm}
For fixed $s\geq2$ and $t>1$ we have $1-\lambda_{2}(P)<8nst^{-n^{2}/3}$,
and $\Delta(H)<8ns\text{ }t^{-n^{2}/3}$ for all sufficiently large
$n.$
\end{thm}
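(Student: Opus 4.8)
The plan is to apply the elementary (``easy'') half of Cheeger's inequality, Eq.~\ref{eq:cheeger}, to the set
\[
A\ \equiv\ \bigl\{\,x\in\Omega:\ x\text{ is reachable from some element of }S\text{ by a sequence of local moves that never enters }B\,\bigr\}.
\]
For this I need three facts about $A$: that $\pi(A)$ is bounded below, that $\pi(\Omega\setminus A)$ is bounded below, and that the probability flux leaving $A$ is at most $\pi(B)$.

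First I would read off the two size bounds from the earlier lemmas. Since $S,S'\subseteq\Lambda$ and no prime path can have a step of height zero at the interior positions $n$ or $n+1$ (for $n$ large), we have $S\cap B=S'\cap B=\emptyset$; hence $S\subseteq A$, and directly from the definition of $A$ also $A\cap B=\emptyset$. By Lemma~\ref{lem:bottleneck} no $B$-avoiding sequence of local moves joins $S$ to $S'$, so $A\cap S'=\emptyset$, i.e.\ $S'\subseteq\Omega\setminus A$. Lemma~\ref{lem:big} then gives $\pi(A)\ge\pi(S)>\tfrac14$ and $\pi(\Omega\setminus A)\ge\pi(S')>\tfrac14$.

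Next I would bound the flux $Q(A,\Omega\setminus A)\equiv\sum_{x\in A}\sum_{y\notin A}\pi(x)P(x,y)$. If $x\in A$, $P(x,y)>0$ and $y\notin B$, then appending the move $x\to y$ to a $B$-avoiding move-path from $S$ to $x$ exhibits $y\in A$; so every edge leaving $A$ lands in $B$. Using reversibility $\pi(x)P(x,y)=\pi(y)P(y,x)$ (established in the previous subsection) together with $\sum_{x}P(y,x)=1$,
\[
Q(A,\Omega\setminus A)\ \le\ \sum_{y\in B}\ \sum_{x\in\Omega}\pi(x)P(x,y)\ =\ \sum_{y\in B}\pi(y)\ =\ \pi(B).
\]
The variational formula $1-\lambda_2(P)=\inf_f \mathcal E(f,f)/\operatorname{Var}_\pi(f)$ with the test function $f=\mathbf 1_A$ (for which $\mathcal E(\mathbf 1_A,\mathbf 1_A)=Q(A,\Omega\setminus A)$ and $\operatorname{Var}_\pi(\mathbf 1_A)=\pi(A)\pi(\Omega\setminus A)$) then yields
\[
1-\lambda_2(P)\ \le\ \frac{Q(A,\Omega\setminus A)}{\pi(A)\,\pi(\Omega\setminus A)}\ \le\ \frac{\pi(B)}{\pi(S)\,\pi(S')}\,.
\]
By Lemma~\ref{lem:Bsmall}, $\pi(B)<t^{-n^2/3}$ for large $n$ — and its proof in fact gives the far smaller bound $\pi(B)=o(t^{-n^2/3})$ — while $\pi(S),\pi(S')>\tfrac14$; hence for all sufficiently large $n$ we get $1-\lambda_2(P)<t^{-n^2/3}<8ns\,t^{-n^2/3}$, and then Eq.~\ref{eq:GapRelations} gives $\Delta(H)=\frac{2nst^2}{1+t^2}\bigl(1-\lambda_2(P)\bigr)<2ns\,t^{-n^2/3}<8ns\,t^{-n^2/3}$ since $t^2/(1+t^2)<1$.

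The one point that takes any care is the flux estimate $Q(A,\Omega\setminus A)\le\pi(B)$: one must avoid the crude bound of (number of states in $B$) times $\max_{x\ne y}P(x,y)$, which would inflate the answer by $|B|$ and by a factor of order $ns$; reversibility is precisely what collapses the double sum to $\pi(B)$. All the combinatorially substantive work — the triangular-Young-diagram and partition-function area estimates behind Lemmas~\ref{lem:defect}, \ref{lem:big}, and \ref{lem:Bsmall} — has already been done, so the theorem is essentially the bookkeeping that feeds them into Cheeger's inequality.
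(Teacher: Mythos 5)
Your proof is correct, and it follows the paper's strategy in its essentials --- same bottleneck set $B$, same choice of $A$ (the $B$-avoiding reachability class of $S$), same flux estimate $Q(A,A^{c})\le\pi(B)$ via reversibility, and the same three lemmas feeding in --- but it diverges at one genuine point. The paper applies Cheeger's inequality in the form $1-\lambda_{2}\le 2Q(A,A^{c})/\pi(A)$, which requires $\pi(A)\le\frac{1}{2}$, and it secures that hypothesis with a separate argument: the color-reversal involution $i\mapsto s-i+1$ maps $A$ into $A^{c}$ (using Lemma \ref{lem:bottleneck} and $B=B'$), hence $\pi(A)\le\frac{1}{2}$. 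You instead use the variational/bottleneck form $1-\lambda_{2}\le Q(A,A^{c})/\bigl(\pi(A)\pi(A^{c})\bigr)$ with the test function $\mathbf{1}_{A}$, which needs no constraint on $\pi(A)$; you then lower-bound $\pi(A^{c})$ directly by noting $S'\subseteq A^{c}$ (again Lemma \ref{lem:bottleneck}) and invoking Lemma \ref{lem:big}, so the involution argument is not needed at all. This is a legitimate simplification; its only cost is the constant $1/(\pi(A)\pi(A^{c}))\le16$ in place of the paper's $2/\pi(A)\le8$, and you correctly observe that this is harmless because the proof of Lemma \ref{lem:Bsmall} actually gives $\pi(B)=O(t^{-n^{2}/2+2n})$, far below $t^{-n^{2}/3}$ (and in any case $16\,t^{-n^{2}/3}<8ns\,t^{-n^{2}/3}$ already yields the stated bound on $1-\lambda_{2}$, with the sharper $\pi(B)$ estimate handling $\Delta(H)<8ns\,t^{-n^{2}/3}$). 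You were also right to check explicitly that $S\cap B=\emptyset$ so that $S\subseteq A$, a point the paper leaves implicit.
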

\begin{proof}
We are going to use (the easy part of) Cheeger's inequality for reversible
Markov chains, which says that for any $A\subset\Omega$ with $\pi(A)\leq\frac{1}{2}$
we have (See \cite[Theorem 13.14 and Eqs (7.4)-(7.6)]{levin2009markov})
\begin{equation}
1-\lambda_{2}\leq\frac{2Q(A,A^{c})}{\pi(A)},\label{eq:cheeger}
\end{equation}
where denoting by $Q(x,y)\equiv\pi(x)P(x,y)=\pi(y)P(y,x)$, we have
$Q(A,A^{c})\equiv\underset{x\in A,y\in A^{c}}{\sum}Q(x,y)$. 

Our choice of $A$ is 
\[
A\equiv\{x\in\Omega|\text{ there exists a sequence of legal moves }x_{0},\ldots,x_{k}=x\text{ with }x_{0}\in S\text{ and all }x_{i}\not\in B\}.
\]
By definition, every edge from $A$ to $A^{c}$ is an edge from $A$
to $B$, so by Lemma \ref{lem:Bsmall} and sufficiently large $n$
we have
\[
Q(A,A^{c})=\sum_{x\in A,y\in B}Q(x,y)\leq\sum_{x\in\Omega,y\in B}\pi(y)P(y,x)=\pi(B)<t^{-n^{2}/3}
\]
On the other hand, by Lemma \ref{lem:big}
\[
\pi(A)\geq\pi(S)>\frac{1}{4}.
\]

Finally, we need to show that $\pi(A)\leq\frac{1}{2}$ so that Cheeger's
inequality may be applied. To do so, we show that if $x\in A$ then
$x'\in A^{c}$ for any $x'$ that is a Motzkin walk of the same shape
as $x$ whose colors at each step are changed according to $i\rightarrow s-i+1$.
Indeed, since $x\in A$ there is a sequence of legal moves $x_{0},\ldots,x_{k}=x\text{ with }x_{0}\in S\text{ and all }x_{i}\not\in B$.
Noting that $B=B'$, we have that $x',x'_{k-1},\ldots,x'_{0}$ is
a sequence of local moves from $x'$ to $x'_{0}\in S'$ with all $x'_{i}\notin B$.
By Lemma \ref{lem:bottleneck} it follows that any sequence of local
moves from $S$ to $x'$ must pass through $B$. Hence $x'\notin A$.

By Eq. \ref{eq:cheeger} we conclude that $1-\lambda_{2}<8t^{-n^{2}/3}$. This is the desired result because from Eq. \ref{eq:GapRelations}
we have ($t>1$)
\begin{equation}
\Delta(H)=\frac{2nst^{2}}{1+t^{2}}(1-\lambda_{2}(P))<8ns\text{ }t^{-n^{2}/3}.\label{eq:Gap_Final}
\end{equation}
\end{proof}
This shows that so far the only model that violates the area-law by
a factor of $n$ in one-dimension and satisfies physical reasonability
criteria of translational invariance, locality and uniqueness of ground
state, has an exponentially small energy gap. It would be very interesting
if a physically reasonable model could be proposed in which the half-chain
entanglement entropy violates the area law by $n$ and that the gap
would vanish as a power-law in $n$. 
\section{Acknowledgements}
LL was supported by NSF grant \href{http://www.nsf.gov/awardsearch/showAward?AWD_ID=1455272}{DMS-1455272}
and a Sloan Fellowship. RM is grateful for the support and freedom
provided by the Herman Goldstine fellowship in mathematical sciences
at IBM TJ Watson Research Center. RM also thanks the Simons Foundation
and the American Mathematical Society for the AMS-Simons travel grant.

\bibliographystyle{apsrev4-1}
\bibliography{mybib}

\begin{thebibliography}{12}%
\makeatletter
\providecommand \@ifxundefined [1]{%
 \@ifx{#1\undefined}
}%
\providecommand \@ifnum [1]{%
 \ifnum #1\expandafter \@firstoftwo
 \else \expandafter \@secondoftwo
 \fi
}%
\providecommand \@ifx [1]{%
 \ifx #1\expandafter \@firstoftwo
 \else \expandafter \@secondoftwo
 \fi
}%
\providecommand \natexlab [1]{#1}%
\providecommand \enquote  [1]{``#1''}%
\providecommand \bibnamefont  [1]{#1}%
\providecommand \bibfnamefont [1]{#1}%
\providecommand \citenamefont [1]{#1}%
\providecommand \href@noop [0]{\@secondoftwo}%
\providecommand \href [0]{\begingroup \@sanitize@url \@href}%
\providecommand \@href[1]{\@@startlink{#1}\@@href}%
\providecommand \@@href[1]{\endgroup#1\@@endlink}%
\providecommand \@sanitize@url [0]{\catcode `\\12\catcode `\$12\catcode
  `\&12\catcode `\#12\catcode `\^12\catcode `\_12\catcode `\%12\relax}%
\providecommand \@@startlink[1]{}%
\providecommand \@@endlink[0]{}%
\providecommand \url  [0]{\begingroup\@sanitize@url \@url }%
\providecommand \@url [1]{\endgroup\@href {#1}{\urlprefix }}%
\providecommand \urlprefix  [0]{URL }%
\providecommand \Eprint [0]{\href }%
\providecommand \doibase [0]{http://dx.doi.org/}%
\providecommand \selectlanguage [0]{\@gobble}%
\providecommand \bibinfo  [0]{\@secondoftwo}%
\providecommand \bibfield  [0]{\@secondoftwo}%
\providecommand \translation [1]{[#1]}%
\providecommand \BibitemOpen [0]{}%
\providecommand \bibitemStop [0]{}%
\providecommand \bibitemNoStop [0]{.\EOS\space}%
\providecommand \EOS [0]{\spacefactor3000\relax}%
\providecommand \BibitemShut  [1]{\csname bibitem#1\endcsname}%
\let\auto@bib@innerbib\@empty
\bibitem [{\citenamefont {Zhang}\ \emph {et~al.}(2016)\citenamefont {Zhang},
  \citenamefont {Ahmadain},\ and\ \citenamefont {Klich}}]{zhang2016quantum}%
  \BibitemOpen
  \bibfield  {author} {\bibinfo {author} {\bibfnamefont {Z.}~\bibnamefont
  {Zhang}}, \bibinfo {author} {\bibfnamefont {A.}~\bibnamefont {Ahmadain}}, \
  and\ \bibinfo {author} {\bibfnamefont {I.}~\bibnamefont {Klich}},\
  }\href@noop {} {\bibfield  {journal} {\bibinfo  {journal} {arXiv preprint
  arXiv:1606.07795}\ } (\bibinfo {year} {2016})}\BibitemShut {NoStop}%
\bibitem [{\citenamefont {Movassagh}\ and\ \citenamefont
  {Shor}(2016)}]{MovassaghShor2016}%
  \BibitemOpen
  \bibfield  {author} {\bibinfo {author} {\bibfnamefont {R.}~\bibnamefont
  {Movassagh}}\ and\ \bibinfo {author} {\bibfnamefont {P.~W.}\ \bibnamefont
  {Shor}},\ }\href {\doibase 10.1073/pnas.1605716113} {\bibfield  {journal}
  {\bibinfo  {journal} {Proc. Natl. Acad. Sci. USA}\ } (\bibinfo {year}
  {2016}),\ 10.1073/pnas.1605716113},\ \Eprint
  {http://arxiv.org/abs/arXiv:1408.1657 [quant-ph]} {arXiv:1408.1657
  [quant-ph]} \BibitemShut {NoStop}%
\bibitem [{\citenamefont {Bravyi}\ \emph {et~al.}(2012)\citenamefont {Bravyi},
  \citenamefont {Caha}, \citenamefont {Movassagh}, \citenamefont {Nagaj},\ and\
  \citenamefont {Shor}}]{Movassagh2012_brackets}%
  \BibitemOpen
  \bibfield  {author} {\bibinfo {author} {\bibfnamefont {S.}~\bibnamefont
  {Bravyi}}, \bibinfo {author} {\bibfnamefont {L.}~\bibnamefont {Caha}},
  \bibinfo {author} {\bibfnamefont {R.}~\bibnamefont {Movassagh}}, \bibinfo
  {author} {\bibfnamefont {D.}~\bibnamefont {Nagaj}}, \ and\ \bibinfo {author}
  {\bibfnamefont {P.~W.}\ \bibnamefont {Shor}},\ }\href@noop {} {\bibfield
  {journal} {\bibinfo  {journal} {Phys. Rev. Lett.}\ }\textbf {\bibinfo
  {volume} {109}},\ \bibinfo {pages} {207202} (\bibinfo {year}
  {2012})}\BibitemShut {NoStop}%
\bibitem [{\citenamefont {Dell'Anna}\ \emph {et~al.}(2016)\citenamefont
  {Dell'Anna}, \citenamefont {Salberger}, \citenamefont {Barbiero},
  \citenamefont {Trombettoni},\ and\ \citenamefont
  {Korepin}}]{dell2016violation}%
  \BibitemOpen
  \bibfield  {author} {\bibinfo {author} {\bibfnamefont {L.}~\bibnamefont
  {Dell'Anna}}, \bibinfo {author} {\bibfnamefont {O.}~\bibnamefont
  {Salberger}}, \bibinfo {author} {\bibfnamefont {L.}~\bibnamefont {Barbiero}},
  \bibinfo {author} {\bibfnamefont {A.}~\bibnamefont {Trombettoni}}, \ and\
  \bibinfo {author} {\bibfnamefont {V.}~\bibnamefont {Korepin}},\ }\href@noop
  {} {\bibfield  {journal} {\bibinfo  {journal} {arXiv preprint
  arXiv:1604.08281}\ } (\bibinfo {year} {2016})}\BibitemShut {NoStop}%
\bibitem [{\citenamefont {Salberger}\ and\ \citenamefont
  {Korepin}(2016)}]{salberger2016fredkin}%
  \BibitemOpen
  \bibfield  {author} {\bibinfo {author} {\bibfnamefont {O.}~\bibnamefont
  {Salberger}}\ and\ \bibinfo {author} {\bibfnamefont {V.}~\bibnamefont
  {Korepin}},\ }\href@noop {} {\bibfield  {journal} {\bibinfo  {journal} {arXiv
  preprint arXiv:1605.03842}\ } (\bibinfo {year} {2016})}\BibitemShut {NoStop}%
\bibitem [{\citenamefont
  {Movassagh}(2016{\natexlab{a}})}]{movassagh2016entanglement}%
  \BibitemOpen
  \bibfield  {author} {\bibinfo {author} {\bibfnamefont {R.}~\bibnamefont
  {Movassagh}},\ }\href@noop {} {\bibfield  {journal} {\bibinfo  {journal}
  {arXiv preprint arXiv:1602.07761}\ } (\bibinfo {year}
  {2016}{\natexlab{a}})}\BibitemShut {NoStop}%
\bibitem [{\citenamefont {Hastings}(2007)}]{Matth_areal}%
  \BibitemOpen
  \bibfield  {author} {\bibinfo {author} {\bibfnamefont {M.~B.}\ \bibnamefont
  {Hastings}},\ }\href@noop {} {\bibfield  {journal} {\bibinfo  {journal}
  {Journal of Statistical Physics}\ ,\ \bibinfo {pages} {P08024}} (\bibinfo
  {year} {2007})}\BibitemShut {NoStop}%
\bibitem [{\citenamefont {Movassagh}(2016{\natexlab{b}})}]{movassagh2016gap}%
  \BibitemOpen
  \bibfield  {author} {\bibinfo {author} {\bibfnamefont {R.}~\bibnamefont
  {Movassagh}},\ }\href@noop {} {\bibfield  {journal} {\bibinfo  {journal}
  {arXiv preprint arXiv:1609.09160}\ } (\bibinfo {year}
  {2016}{\natexlab{b}})}\BibitemShut {NoStop}%
\bibitem [{\citenamefont {Kraus}\ \emph {et~al.}(2008)\citenamefont {Kraus},
  \citenamefont {B\"uchler}, \citenamefont {Diehl}, \citenamefont {Kantian},
  \citenamefont {Micheli},\ and\ \citenamefont {Zoller}}]{Kraus2008}%
  \BibitemOpen
  \bibfield  {author} {\bibinfo {author} {\bibfnamefont {B.}~\bibnamefont
  {Kraus}}, \bibinfo {author} {\bibfnamefont {H.~P.}\ \bibnamefont
  {B\"uchler}}, \bibinfo {author} {\bibfnamefont {S.}~\bibnamefont {Diehl}},
  \bibinfo {author} {\bibfnamefont {A.}~\bibnamefont {Kantian}}, \bibinfo
  {author} {\bibfnamefont {A.}~\bibnamefont {Micheli}}, \ and\ \bibinfo
  {author} {\bibfnamefont {P.}~\bibnamefont {Zoller}},\ }\href@noop {}
  {\bibfield  {journal} {\bibinfo  {journal} {Phys. Rev. A}\ }\textbf {\bibinfo
  {volume} {78}},\ \bibinfo {pages} {042307} (\bibinfo {year}
  {2008})}\BibitemShut {NoStop}%
\bibitem [{\citenamefont {Bravyi}\ and\ \citenamefont
  {Terhal}(2009)}]{bravyi2009complexity}%
  \BibitemOpen
  \bibfield  {author} {\bibinfo {author} {\bibfnamefont {S.}~\bibnamefont
  {Bravyi}}\ and\ \bibinfo {author} {\bibfnamefont {B.}~\bibnamefont
  {Terhal}},\ }\href@noop {} {\bibfield  {journal} {\bibinfo  {journal} {Siam
  journal on computing}\ }\textbf {\bibinfo {volume} {39}},\ \bibinfo {pages}
  {1462} (\bibinfo {year} {2009})}\BibitemShut {NoStop}%
\bibitem [{\citenamefont {Hardy}\ and\ \citenamefont
  {Ramanujan}(1918)}]{hardy1918asymptotic}%
  \BibitemOpen
  \bibfield  {author} {\bibinfo {author} {\bibfnamefont {G.~H.}\ \bibnamefont
  {Hardy}}\ and\ \bibinfo {author} {\bibfnamefont {S.}~\bibnamefont
  {Ramanujan}},\ }\href@noop {} {\bibfield  {journal} {\bibinfo  {journal}
  {Proceedings of the London Mathematical Society}\ }\textbf {\bibinfo {volume}
  {2}},\ \bibinfo {pages} {75} (\bibinfo {year} {1918})}\BibitemShut {NoStop}%
\bibitem [{\citenamefont {Levin}\ \emph {et~al.}(2009)\citenamefont {Levin},
  \citenamefont {Peres},\ and\ \citenamefont {Wilmer}}]{levin2009markov}%
  \BibitemOpen
  \bibfield  {author} {\bibinfo {author} {\bibfnamefont {D.~A.}\ \bibnamefont
  {Levin}}, \bibinfo {author} {\bibfnamefont {Y.}~\bibnamefont {Peres}}, \ and\
  \bibinfo {author} {\bibfnamefont {E.~L.}\ \bibnamefont {Wilmer}},\
  }\href@noop {} {\emph {\bibinfo {title} {Markov chains and mixing times}}}\
  (\bibinfo  {publisher} {American Mathematical Soc.},\ \bibinfo {year}
  {2009})\BibitemShut {NoStop}%
\end{thebibliography}%

\end{document}